\newcommand{\QK}{\textup{\textsc{Quasi-Kernel}}\xspace}
\newcommand{\MinQK}{\textup{\textsc{Min-Quasi-Kernel}}\xspace}
\DeclareMathOperator{\opt}{opt}
\newtheorem{theorem}{Theorem}[section]
\newtheorem{proposition}[theorem]{Proposition}
\newtheorem{lemma}[theorem]{Lemma}
\theoremstyle{remark}
\newtheorem{claim}{Claim}
\title{Algorithmic aspects of quasi-kernels}
\author{Hélène Langlois \and Frédéric Meunier \and Romeo Rizzi \and Stéphane Vialette}
\address[Hélène Langlois]{CERMICS, École des Ponts ParisTech, 77455 Marne-la-Vallée, France \and LIGM, Univ Gustave Eiffel, 77454 Marne-la-Vallée, France}
\email{helene.langlois@enpc.fr}
\address[Frédéric Meunier]{CERMICS, École des Ponts ParisTech, 77455 Marne-la-Vallée, France}
\email{frederic.meunier@enpc.fr}
\address[Romeo Rizzi]{Department of Computer Science, Università di Verona,  37129 Verona, Italy}
\email{romeo.rizzi@univr.it}
\address[Stéphane Vialette]{LIGM, Univ Gustave Eiffel, CNRS, 77454 Marne-la-Vallée, France}
\email{stephane.vialette@univ-eiffel.fr}
\begin{document}

\begin{abstract}
In a digraph, a quasi-kernel is a subset of vertices that is independent and such that every vertex can reach some vertex in that set via a directed path of length at most two.
	Whereas Chv{\'a}tal and Lov{\'a}sz proved  in 1974 that every digraph has a quasi-kernel,
 	very little is known so far about the complexity of finding small quasi-kernels.
 	In 1976 Erd\H{o}s and Sz\'ekely conjectured that every sink-free digraph $D = (V, A)$ has a quasi-kernel 
	of size at most $|V|/2$.
 	Obviously, if $D$ has two disjoint quasi-kernels then it has a quasi-kernel of size at most $|V|/2$, and in 2001 Gutin, Koh, Tay and Yeo conjectured that every sink-free digraph has two disjoint quasi-kernels. Yet, they constructed in 2004 a counterexample, thereby disproving
this stronger conjecture.
 	We shall show that, not only sink-free digraphs occasionally fail to contain two disjoint quasi-kernels, 
  but it is computationally hard to distinguish those that do from those that do not. 
 	We also prove that the problem of computing a small quasi-kernel is polynomial time solvable for 
 	orientations of trees but is computationally hard in most other cases 
 	(and in particular for restricted acyclic digraphs). 
\end{abstract}

 \keywords{Quasi-kernel, digraph, computational complexity.}  

\maketitle

\section{Introduction}
\label{sec:intro}

Let $D = (V, A)$ be a digraph. A \emph{kernel} $K$ is a subset of vertices that is independent 
(\emph{i.e.}, all pairs of distinct vertices of $K$ are non-adjacent) and such that, for every 
vertex $v \notin K$, there exists $w \in K$ with $(v, w) \in A$.
Kernels were introduced by von Neumann and Morgenstern~\cite{vonneumann1947}. 
It is now a central notion in graph theory and has important applications in relations with 
colorings~\cite{galvin_list_1995}, perfect graphs~\cite{DBLP:journals/dm/BorosG06}, game theory and economics~\cite{igarashi_coalition_2017}, logic~\cite{kernel_SAT}, etc.
Clearly, not every digraph has a kernel 
(for instance, a directed cycle of odd length does not contain a kernel) and a digraph may have several
kernels.
Chv{\'a}tal proved that deciding whether a digraph has a kernel is 
\NP-complete~\cite{chvatal_computational_1973} and the problem is equally hard
for planar digraph with bounded degree~\cite{FRAENKEL1981257}.

Chv\'atal and Lov\'asz~\cite{10.1007/BFb0066192} later on introduced the notion of quasi-kernels.
A \emph{quasi-kernel} in a digraph is a subset of vertices that is independent and such that every vertex can reach some vertex in that set via a directed path of length at most two.
Defining the (directed) distance $d(v,w)$ from a vertex $v$ to a vertex $w$ as the minimum length of a directed path from $v$ to $w$, a quasi-kernel $Q$ is a subset of vertices that is independent and such that for every vertex $v\notin Q$ there exists $w\in Q$ such that $d(v,w)\leq 2$. In particular, any kernel is a quasi-kernel.
Yet, unlike kernels, every digraph has a quasi-kernel.  Chvátal and Lovász provided a proof of this fact, which can be turned into an easy polynomial time algorithm (alternate simple proofs exist~\cite{DBLP:journals/jgt/Bondy03}).
However, deciding whether there exists a quasi-kernel that contains a specified vertex is 
\NP-complete~\cite{CROITORU2015863}.
Jacob and Meyniel~\cite{DBLP:journals/dm/JacobM96} proved that if a digraph does not have a kernel 
then it must contain at least three (not necessarily disjoint) quasi-kernels.
Digraphs with exactly one and two quasi‐kernels have been characterized by Gutin et al.~\cite{Gutin_2004}. 
It follows from this characterization that if a digraph has precisely 
two quasi-kernels then these two quasi-kernels are actually disjoint. 
At a more general level, counting quasi-kernels in digraphs is as hard as counting independent 
sets in graphs~\cite{DBLP:conf/approx/DyerGGJ00}.

In 1976 Erd\H{o}s and Sz\'ekely~\cite{conj-qk} conjectured that
a sink-free digraph $D = (V, A)$  (\emph{i.e.}, every vertex of $D$ has positive out-degree) 
has a quasi-kernel of size at most $|V|/2$
(note that if $D$ has two disjoint quasi-kernels then it has a quasi-kernel of size at most $|V|/2$).
This question is known as the \emph{small quasi-kernel conjecture}.
So far, this conjecture is only confirmed for narrow classes of digraphs.
In 2008, Heard and Huang~\cite{heard_disjoint_2008} showed that every $D = (V, A)$ digraph $D$  
has two disjoint quasi-kernels if $D$ is semicomplete multipartite (including tournaments), 
quasi-transitive (including transitive digraphs), or locally semicomplete. 
Very recently, Kostochka at al.~\cite{kostochka_towards_2020} renewed the interest in
the small quasi-kernel conjecture and proved that the conjecture holds
for orientations of $4$-colorable graphs (in particular, for all planar graphs).

In 2011 Gutin et al.~\cite{Gutin_2001} conjectured that every sink-free digraph has two disjoint quasi-kernels
(this stronger conjecture implies the original small quasi-kernel conjecture). 
In 2004, in an update of their paper,  the authors constructed a counterexample with 14 vertices~\cite{Gutin_2004}.
As we shall prove, not only sink-free digraphs occasionally fail to contain two disjoint quasi-kernels, 
but it is actually computationally hard to distinguish those that do from those that do not. 
Note that, whereas the small quasi-kernel conjecture is true for planar sink-free 
digraphs~\cite{kostochka_towards_2020}, no sink-free planar digraph without two disjoint quasi-kernels 
is known so far
(the counterexample constructed by Gutin et al.~\cite{Gutin_2004} does contain a directed $K_7$).
Whether such a planar graph exists is not known but we shall show that deciding whether a sink-free bounded degree planar 
digraph has three disjoint quasi-kernels is \NP-complete.

Surprisingly enough, whereas every digraph has a quasi-kernel, very little is known about the 
algorithmic aspects of minimizing the size of a quasi-kernel.
In this paper, we initiate the study of the problem of finding a quasi-kernel of minimum size problem which we call \MinQK (and we let \QK stand for the related decision problem).
As we shall see soon, the problem is computationally hard even for simple digraph classes,
so that most of our work is devoted to studying the complexity of finding small quasi-kernels in 
acyclic digraphs.

\section{Disjoint quasi-kernels}

Towards proving the small quasi-kernel conjecture, Gutin et al.~\cite{Gutin_2001} conjectured that every 
sink-free digraph has two disjoint quasi-kernel 
(this stronger conjecture implies the original small quasi-kernel conjecture).
Whereas the same authors constructed a nice counterexample with 14 vertices~\cite{Gutin_2004}
(thereby proving that there exist digraphs that contain neither sinks nor a pair of disjoint quasi-kernels),
Heard and Huang~\cite{heard_disjoint_2008} proved that, for several classes of digraphs, 
the condition of containing no sinks guarantees the existence of a pair of disjoint quasi-kernels. 
The classes contain semicomplete multipartite, quasi-transitive, and locally semicomplete digraphs.
We show that, not only sink-free digraphs occasionally fail to contain two disjoint quasi-kernels, but it is
\NP-complete to distinguish those that do from those that do not
(our proof uses the counterexample constructed by Gutin et al.~\cite{Gutin_2004}). 

\begin{theorem}
	\label{theorem:2QK-1}
 	Deciding if a digraph has two disjoint quasi-kernels is \NP-complete, 
	even for digraphs with maximum out-degree six.
\end{theorem}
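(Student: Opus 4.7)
The plan is to prove NP-completeness by showing membership in NP and then reducing from an NP-hard variant of SAT (most naturally \SAex, which the paper has already introduced). Membership is routine: a certificate is a pair $(Q_1,Q_2)$ of disjoint subsets of $V$, and one checks in polynomial time that each $Q_i$ is independent and that every vertex of $D$ reaches $Q_i$ by a directed path of length at most two. So the whole proof really concerns the hardness reduction.

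For hardness, I would build a digraph $D_{\varphi}$ from a formula $\varphi$ using three kinds of gadgets that are wired together at low degree. First, a \emph{variable gadget} for each $x_i$ that is a small sink-free sub-digraph with exactly two quasi-kernels $T_i$ and $F_i$, themselves disjoint, interpreted as the two truth values of $x_i$. Second, a \emph{literal-distribution gadget} attached to each literal vertex that fans out to all clauses in which the literal occurs, while keeping out-degree bounded by six: since in \SAex each variable has a constant number of occurrences, this fan-out can be done with a constant-size tree of dummy vertices without creating new quasi-kernel options. Third, and crucially, a \emph{clause gadget} for each $C_j$ built around a copy of the 14-vertex Gutin--Yeo digraph from~\cite{Gutin_2004}: the three literal vertices of $C_j$ are identified with (or glued to) three carefully chosen vertices of that copy, so that the clause sub-digraph has two disjoint quasi-kernels if and only if at least one of the three literals is set to true by the variable gadgets, while if all three are false the sub-digraph degenerates into (a digraph containing) the Gutin--Yeo counterexample and hence offers no pair of disjoint quasi-kernels.

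The correctness argument then runs in the usual two directions. If $\varphi$ is satisfiable, pick a satisfying assignment, take $T_i$ or $F_i$ in each variable gadget accordingly, and extend to disjoint quasi-kernels of $D_{\varphi}$ using the ``good'' pair available in every clause gadget because at least one literal is true. Conversely, assume $D_{\varphi}$ has two disjoint quasi-kernels $Q_1, Q_2$. The variable gadgets are designed so that $Q_1$ and $Q_2$ restricted to the $i$-th gadget must be $\{T_i,F_i\}$ in one of the two orders, giving a truth assignment; if some clause $C_j$ were unsatisfied, the Gutin--Yeo subgraph inside its clause gadget would have to carry a pair of disjoint quasi-kernels, contradicting~\cite{Gutin_2004}. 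The reduction is polynomial, and the out-degree bound of six is enforced locally in each gadget.

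The main obstacle is engineering the variable and clause gadgets so that all three properties hold simultaneously: (i) the only pairs of disjoint quasi-kernels inside a variable gadget are exactly $\{T_i,F_i\}$; (ii) the Gutin--Yeo copy inside a clause gadget becomes ``repairable'' by a true literal but remains a true counterexample when all literals are false; and (iii) the gluings and fan-out remain within out-degree six and do not create unintended domination paths of length two that would let a quasi-kernel of one gadget spill into another and break the local analysis. The cleanest way I would try this is to attach each literal vertex to the Gutin--Yeo copy by an arc whose tail, when included in a quasi-kernel, dominates-within-two a designated vertex of the copy and thus effectively ``deletes'' it from the domination requirement of the clause gadget, reducing the remaining obstruction to a graph that provably does admit two disjoint quasi-kernels.
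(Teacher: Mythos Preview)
Your high-level plan matches the paper's: reduce from a \textsc{Sat} variant and plant a copy of the Gutin et al.\ counterexample in each clause gadget so that a designated source vertex $s_{C,1}$ can only reach one of the two quasi-kernels through a satisfying literal. The missing ingredient, however, is substantive and is exactly what makes the paper's argument go through.

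Your variable-gadget idea (``exactly two quasi-kernels $T_i$ and $F_i$'') only forces the two literal vertices of $x_i$ into \emph{different} quasi-kernels; it does not fix which one. The truth assignment you read off is therefore well-defined only relative to a global choice of reference, say $Q_2$. But nothing in your construction pins $s_{C,1}$ to $Q_1$ uniformly across all clauses: for one clause $C$ you could have $s_{C,1}\in Q_1$, forcing a literal of $C$ into $Q_2$, while for another clause $C'$ you have $s_{C',1}\in Q_2$, forcing a literal of $C'$ into $Q_1$. The first clause is then satisfied by the $Q_2$-assignment and the second by its complement, so no single satisfying assignment emerges. Your converse direction (``if some clause were unsatisfied, the Gutin--Yeo subgraph would have to carry a pair of disjoint quasi-kernels'') tacitly assumes this consistency and therefore does not close.

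The paper resolves this with a single global gadget $D_0$ containing a distinguished vertex $b'$, together with arcs $(\texttt{t}_i,b')$ and $(\texttt{f}_i,b')$ from \emph{every} literal vertex. Exactly one of $Q_1,Q_2$ contains $b'$; by independence that quasi-kernel then contains no literal vertex at all, which forces $s_{C,1}$ into it for every clause $C$ simultaneously. The satisfying literal of each clause is thus forced into the \emph{other} quasi-kernel, and the assignment is consistent. This is also why the paper reduces from plain \textsc{3-Sat} rather than a bounded-occurrence variant: $b'$ has unbounded in-degree, but the theorem only constrains out-degree, and the bound six arises from the clause vertex $k_{C,1}$, which has three arcs inside the $K_7$-orientation and three arcs to literal vertices.
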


\begin{proof}

 Given a boolean expression $F$ in conjunctive normal form (CNF) where each clause is the disjunction of at most three distinct 
 literals, \textsc{3-\SAT} asks to decide whether $F$ is satisfiable. 
  We reduce from \textsc{3-\SAT} which is known to be 
 \NP-complete~\cite{Kar72}.
  
  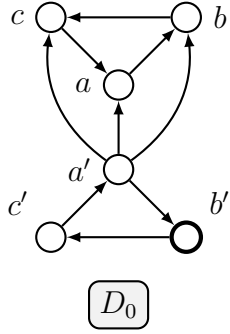
\begin{figure}[t!]
    \centering
    \begin{tikzpicture}
  [
    scale=.9,
    >=latex,
    thick,
    vertex/.style={shape=circle,draw=black},
    qk/.style={vertex,fill=black!25},
    sink/.style={vertex,fill=black!100}
  ]
  \begin{scope}[]
    \node [vertex,label=west:$c$] (C) at (-1,1) {};
    \node [vertex,label=west:$a$] (A) at (0,0) {};
    \node [vertex,label=east:$b$] (B) at (1,1) {};

    \draw [->] (A) to(B);
    \draw [->] (B) to (C);
    \draw [->] (C) to (A);

    \node [vertex,label=north west:$c'$] (Cp) at (-1,-2.25) {};
    \node [vertex,label=west:$a'$] (Ap) at (0,-1.25) {};
    \node [vertex,ultra thick,label=north east:$b'$] (Bp) at (1,-2.25) {};

    \draw [->] (Ap) to (Bp);
    \draw [->] (Bp) to (Cp);
    \draw [->] (Cp) to (Ap);

    \draw [->] (Ap) to (A);
    \draw [->] (Ap) to [bend right=30] (B);
    \draw [->] (Ap) to [bend left=30] (C);

    \node [draw,rounded corners,fill=black!5] (name) at (0,-3.25) {$D_0$};
  \end{scope}

\end{tikzpicture}
    \caption{\label{fig:Disjoint-QK2-0-gadget}%
      Proof of Theorem~\ref{theorem:2QK-1}: gadget.
    }
  \end{figure}

  \begin{figure}[t!]
    \centering
    \begin{tikzpicture}
  [
    scale=.9,
    >=latex,
    thick,
    vertex/.style={shape=circle,draw=black},
    qk/.style={vertex,fill=black!25},
    sink/.style={vertex,fill=black!100}
  ]
  % main
  \begin{scope}[]
    \node [vertex,label=north:$c$] (C) at (-1,1) {};
    \node [vertex,label=west:$a$] (A) at (0,0) {};
    \node [vertex,label=north:$b$] (B) at (1,1) {};

    \draw [->] (A) to(B);
    \draw [->] (B) to (C);
    \draw [->] (C) to (A);

    \node [vertex,label=north west:$c'$] (Cp) at (-1,-2.25) {};
    \node [vertex,label=west:$a'$] (Ap) at (0,-1.25) {};
    \node [vertex,ultra thick,label=north east:$b'$] (Bp) at (1,-2.25) {};

    \draw [->] (Ap) to (Bp);
    \draw [->] (Bp) to (Cp);
    \draw [->] (Cp) to (Ap);

    \draw [->] (Ap) to (A);
    \draw [->] (Ap) to [bend right=30] (B);
    \draw [->] (Ap) to [bend left=30] (C);

    \node [draw,rounded corners,fill=black!5] (name) at (0,-3.25) {$D_0$};
  \end{scope}

  %variable
  \begin{scope}[xshift=2cm,yshift=-4cm]
    \node [vertex,label=west:$A''_i$] (Ci) at (-1,-3.25) {};
    \node [vertex,label=west:$A_i$] (Ai) at (0,-2.25) {};
    \node [vertex,label=east:$A'_i$] (Bi) at (1,-3.25) {};

    \draw [->] (Ai) to(Bi);
    \draw [->] (Bi) to (Ci);
    \draw [->] (Ci) to (Ai);

    \node [vertex,ultra thick,label=south west:$\texttt{f}_i$] (Fi) at (-1,0) {};
    \node [vertex,label=west:$B_i$] (Aip) at (0,-1) {};
    \node [vertex,ultra thick,label=south east:$\texttt{t}_i$] (Ti) at (1,0) {};

    \draw [->] (Aip) to (Ti);
    \draw [->] (Ti) to (Fi);
    \draw [->] (Fi) to (Aip);

    \draw [->] (Aip) to (Ai);
    \draw [->] (Aip) to [bend left=30] (Bi);
    \draw [->] (Aip) to [bend right=30] (Ci);

    \draw [->] (Fi) to (Bp);
    \draw [->] (Ti) to [bend right=30] (Bp);

    \node [draw,rounded corners,fill=black!5] (name) at (0,-4.25) {$D_i$};
  \end{scope}
\end{tikzpicture}
    \caption{\label{fig:Disjoint-QK2-variable-gadget}%
      Proof of Theorem~\ref{theorem:2QK-1}: variable gadget.
    }
  \end{figure}
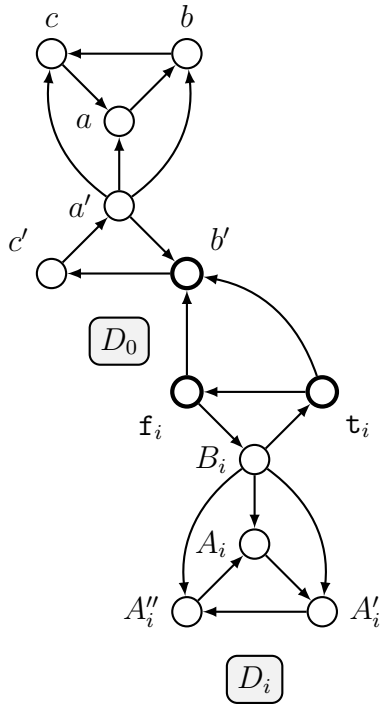

  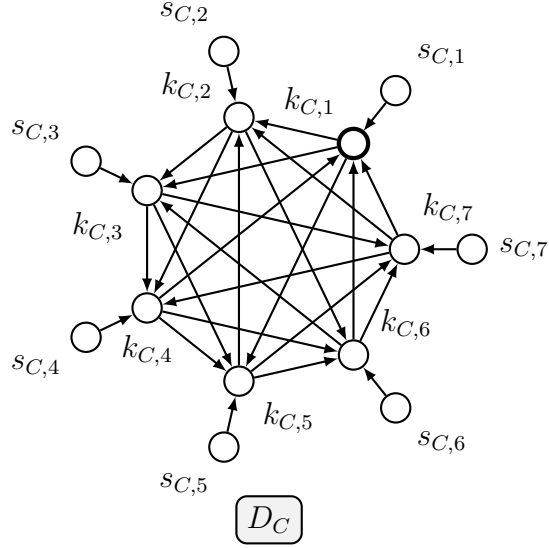
\begin{figure}[t!]
    \centering
    \begin{tikzpicture}
  [
    scale=.9,
    >=latex,
    thick,
    vertex/.style={shape=circle,draw=black},
    qk/.style={vertex,fill=black!25},
    sink/.style={vertex,fill=black!100}
  ]
  \begin{scope}[xshift=0cm,yshift=-13cm]
    \foreach \n in {1,2,...,7} {
      \node [vertex,label=(\n*360/7)+65:$k_{C,\n}$] (K\n) at (\n*360/7:2cm) {};
      \node [vertex,label=\n*360/7:$s_{C,\n}$] (S\n) at (\n*360/7:3cm) {};
      \path [->] (S\n) edge node [left] {} (K\n);
    }
    \node [vertex,ultra thick] (K1) at (360/7:2cm) {};
    \foreach \k in {K2,K3,K5} {
      \path [->] (K1) edge node [left] {} (\k);
    }
    \foreach \k in {K3,K4,K6} {
      \path [->] (K2) edge node [left] {} (\k);
    }
    \foreach \k in {K4,K5,K7} {
      \path [->] (K3) edge node [left] {} (\k);
    }
    \foreach \k in {K5,K6,K1} {
      \path [->] (K4) edge node [left] {} (\k);
    }
    \foreach \k in {K6,K7,K2} {
      \path [->] (K5) edge node [left] {} (\k);
    }
    \foreach \k in {K7,K1,K3} {
      \path [->] (K6) edge node [left] {} (\k);
    }
    \foreach \k in {K1,K2,K4} {
      \path [->] (K7) edge node [left] {} (\k);
    }

    \node [draw,rounded corners,fill=black!5] (name) at (0,-4) {$D_C$};
  \end{scope}

\end{tikzpicture}
    \caption{\label{fig:Disjoint-QK2-clause-gadget}%
      Proof of Theorem~\ref{theorem:2QK-1}: clause gadget.
    }
  \end{figure}

  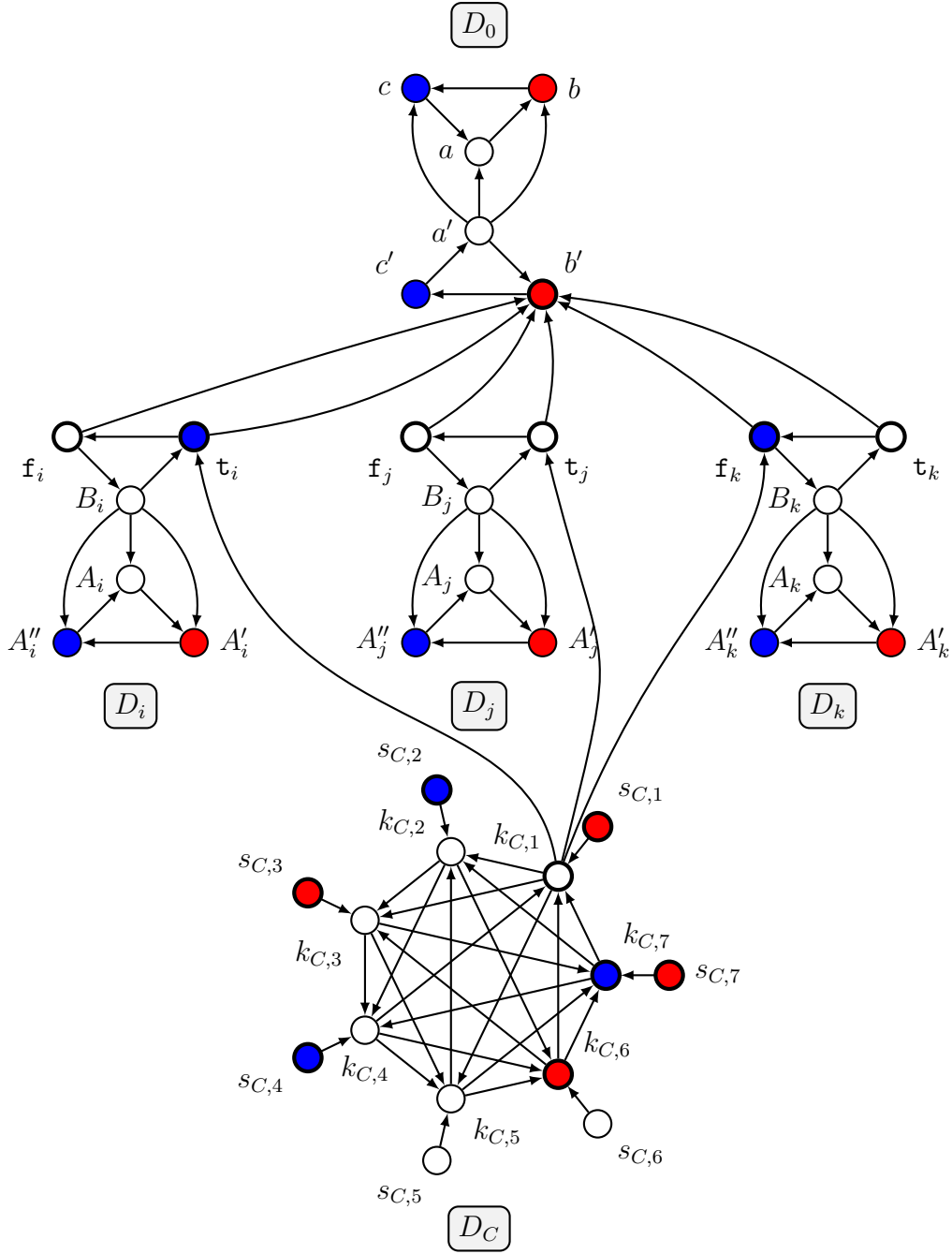
\begin{figure}[t!]
    \centering
    \begin{tikzpicture}
  [
    scale=.9,
    >=latex,
    thick,
    vertex/.style={shape=circle,draw=black},
    qk/.style={vertex,fill=black!25},
    sink/.style={vertex,fill=black!100}
  ]
  % main
  \begin{scope}[]
    \node [vertex,fill=blue,label=west:$c$] (C) at (-1,1) {};
    \node [vertex,label=west:$a$] (A) at (0,0) {};
    \node [vertex,fill=red!100,label=east:$b$] (B) at (1,1) {};

    \draw [->] (A) to(B);
    \draw [->] (B) to (C);
    \draw [->] (C) to (A);

    \node [vertex,fill=blue,label=north west:$c'$] (Cp) at (-1,-2.25) {};
    \node [vertex,label=west:$a'$] (Ap) at (0,-1.25) {};
    \node [vertex,fill=red!100,ultra thick,label=north east:$b'$] (Bp) at (1,-2.25) {};

    \draw [->] (Ap) to (Bp);
    \draw [->] (Bp) to (Cp);
    \draw [->] (Cp) to (Ap);

    \draw [->] (Ap) to (A);
    \draw [->] (Ap) to [bend right=30] (B);
    \draw [->] (Ap) to [bend left=30] (C);

    \node [draw,rounded corners,fill=black!5] (name) at (0,2) {$D_0$};
  \end{scope}

  %variable i
  \begin{scope}[xshift=-5.5cm,yshift=-5.5cm]
    \node [vertex,fill=blue,label=west:$A''_i$] (Ci) at (-1,-2.25) {};
    \node [vertex,label=west:$A_i$] (Ai) at (0,-1.25) {};
    \node [vertex,fill=red!100,label=east:$A'_i$] (Bi) at (1,-2.25) {};

    \draw [->] (Ai) to(Bi);
    \draw [->] (Bi) to (Ci);
    \draw [->] (Ci) to (Ai);

    \node [vertex,ultra thick,label=south west:$\texttt{f}_i$] (Fi) at (-1,1) {};
    \node [vertex,label=west:$B_i$] (Aip) at (0,0) {};
    \node [vertex,fill=blue,ultra thick,label=south east:$\texttt{t}_i$] (Ti) at (1,1) {};

    \draw [->] (Aip) to (Ti);
    \draw [->] (Ti) to (Fi);
    \draw [->] (Fi) to (Aip);

    \draw [->] (Aip) to (Ai);
    \draw [->] (Aip) to [bend left=30] (Bi);
    \draw [->] (Aip) to [bend right=30] (Ci);

    \draw [->] (Fi) to [bend left=2] (Bp);
    \draw [->] (Ti) to [bend right=15]  (Bp);

    \node [draw,rounded corners,fill=black!5] (name) at (0,-3.25) {$D_i$};
  \end{scope}

  %variable j
  \begin{scope}[xshift=0cm,yshift=-5.5cm]
    \node [vertex,fill=blue,label=west:$A''_j$] (Cj) at (-1,-2.25) {};
    \node [vertex,label=west:$A_j$] (Aj) at (0,-1.25) {};
    \node [vertex,fill=red!100,label=east:$A'_j$] (Bj) at (1,-2.25) {};

    \draw [->] (Aj) to(Bj);
    \draw [->] (Bj) to (Cj);
    \draw [->] (Cj) to (Aj);

    \node [vertex,ultra thick,label=south west:$\texttt{f}_j$] (Fj) at (-1,1) {};
    \node [vertex,label=west:$B_j$] (Ajp) at (0,0) {};
    \node [vertex,ultra thick,label=south east:$\texttt{t}_j$] (Tj) at (1,1) {};

    \draw [->] (Ajp) to (Tj);
    \draw [->] (Tj) to (Fj);
    \draw [->] (Fj) to (Ajp);

    \draw [->] (Ajp) to (Aj);
    \draw [->] (Ajp) to [bend left=30] (Bj);
    \draw [->] (Ajp) to [bend right=30] (Cj);

    \draw [->] (Fj) to [bend right=16] (Bp);
    \draw [->] (Tj) to [bend right=12] (Bp);

    \node [draw,rounded corners,fill=black!5] (name) at (0,-3.25) {$D_j$};
  \end{scope}

  %variable k
  \begin{scope}[xshift=5.5cm,yshift=-5.5cm]
    \node [vertex,fill=blue,label=west:$A''_k$] (Ck) at (-1,-2.25) {};
    \node [vertex,label=west:$A_k$] (Ak) at (0,-1.25) {};
    \node [vertex,fill=red!100,label=east:$A'_k$] (Bk) at (1,-2.25) {};

    \draw [->] (Ak) to(Bk);
    \draw [->] (Bk) to (Ck);
    \draw [->] (Ck) to (Ak);

    \node [vertex,fill=blue,ultra thick,label=south west:$\texttt{f}_k$] (Fk) at (-1,1) {};
    \node [vertex,label=west:$B_k$] (Akp) at (0,0) {};
    \node [vertex,ultra thick,label=south east:$\texttt{t}_k$] (Tk) at (1,1) {};

    \draw [->] (Akp) to (Tk);
    \draw [->] (Tk) to (Fk);
    \draw [->] (Fk) to (Akp);

    \draw [->] (Akp) to (Ak);
    \draw [->] (Akp) to [bend left=30] (Bk);
    \draw [->] (Akp) to [bend right=30] (Ck);

    \draw [->] (Fk) to [bend right=7] (Bp);
    \draw [->] (Tk) to [bend right=15]  (Bp);

    \node [draw,rounded corners,fill=black!5] (name) at (0,-3.25) {$D_k$};
  \end{scope}

  \begin{scope}[xshift=0cm,yshift=-13cm]
    \foreach \n in {1,2,...,7} {
      \node [vertex,label=(\n*360/7)+65:$k_{C,\n}$] (K\n) at (\n*360/7:2cm) {};
      \node [vertex,label=\n*360/7:$s_{C,\n}$] (S\n) at (\n*360/7:3cm) {};
      \path [->] (S\n) edge node [left] {} (K\n);
    }
    \node [vertex,ultra thick] (K1) at (1*360/7:2cm) {};
    \node [vertex,fill=red!100,ultra thick] (S1) at (1*360/7:3cm) {};
    \node [vertex,fill=red!100,ultra thick] (S7) at (7*360/7:3cm) {};
    \node [vertex,fill=red!100,ultra thick] (S3) at (3*360/7:3cm) {};
    \node [vertex,fill=red!100,ultra thick] (K6) at (6*360/7:2cm) {};
    \node [vertex,fill=blue,ultra thick] (S2) at (2*360/7:3cm) {};
    \node [vertex,fill=blue,ultra thick] (S4) at (4*360/7:3cm) {};
    \node [vertex,fill=blue,ultra thick] (K7) at (7*360/7:2cm) {};

    \foreach \k in {K2,K3,K5} {
      \path [->] (K1) edge node [left] {} (\k);
    }
    \foreach \k in {K3,K4,K6} {
      \path [->] (K2) edge node [left] {} (\k);
    }
    \foreach \k in {K4,K5,K7} {
      \path [->] (K3) edge node [left] {} (\k);
    }
    \foreach \k in {K5,K6,K1} {
      \path [->] (K4) edge node [left] {} (\k);
    }
    \foreach \k in {K6,K7,K2} {
      \path [->] (K5) edge node [left] {} (\k);
    }
    \foreach \k in {K7,K1,K3} {
      \path [->] (K6) edge node [left] {} (\k);
    }
    \foreach \k in {K1,K2,K4} {
      \path [->] (K7) edge node [left] {} (\k);
    }

    \node [draw,rounded corners,fill=black!5] (name) at (0,-4) {$D_C$};
  \end{scope}

  \draw [->] (K1) .. controls ($ (K1) +(-.5,+3) $) and ($ (Ti) +(+1,-5) $) .. (Ti);
  \draw [->] (K1) .. controls ($ (K1) +(1,+4) $) and ($ (Tj) +(+1,-4) $) .. (Tj);
  \draw [->] (K1) .. controls ($ (K1) +(1.5,+4) $) and ($ (Fk) +(0,-2) $) .. (Fk);
\end{tikzpicture}
    \caption{\label{fig:Disjoint-QK2-cnx-gadget}%
      Proof of Theorem~\ref{theorem:2QK-1}: Connecting the gadgets for clause 
      $c = x_i \vee x_j \vee \neg x_k$.
      Red (resp. Blue) vertices denote vertices in $Q_1$ (resp. $Q_2$). 
      Shown here is the case 
      $\varphi(x_i) = \texttt{true}$,
      $\varphi(x_j) = \texttt{false}$ and $\varphi(x_k) = \texttt{false}$
      (\emph{i.e.}, $\texttt{t}_i \in Q_2$, $\texttt{f}_j \in Q_2$ and 
      $\texttt{f}_k \in Q_2$).
      Note that $\texttt{f}_j \notin Q_2$ and $\texttt{t}_j \notin Q_2$
      imply $\varphi(x_j) = \texttt{false}$.
      }
  \end{figure}

  Consider an instance of \textsc{3-\SAT}. 
  Let $X = \{x_1,x_2,\ldots, x_n\}$ be its variables, 
  and let $F = C_1 \vee C_2 \vee \dots \vee C_m$ be its CNF-formula.
  We construct a digraph $D = (V, A)$ as follows.
  \begin{itemize}
    \item  
    We start with the gadget $D_0$ shown in Fig.~\ref{fig:Disjoint-QK2-0-gadget}
    which contains the specified vertex $b'$.
    \item 
    For every boolean variable $x_i \in X$ we introduce the gadget $D_i$ 
    shown in the right part of Fig.~\ref{fig:Disjoint-QK2-variable-gadget} which contains two 
    specified vertices $\texttt{f}_i$ and $\texttt{t}_i$.
    Furthermore, we connect $D_i$ to $D_0$ with two arcs $(\texttt{f}_i, b')$ and $(\texttt{t}_i, b')$.
    \item
    For every clause $C = \ell_i \vee \ell_j \vee \ell_k$ of $F$ we introduce the gadget $D_C$ shown in Fig.~\ref{fig:Disjoint-QK2-clause-gadget} which contains one specified 
    vertices $k_{C, 1}$.
Furthermore, we connect $D_C$ to the gadgets $D_i,D_j,D_k$ with three arcs $(k_{C,1}, \lambda_i)$, $(k_{C,1}, \lambda_j)$ and
    $(k_{C,1}, \lambda_k)$,
    where   $\lambda_i = \texttt{t}_i$ (resp. $\lambda_j = \texttt{t}_j$  \& $\lambda_k = \texttt{t}_k$) 
    if $\ell_i$ (resp. $\ell_j$  \& $\ell_k$) is a positive literal, and $\lambda_i = \texttt{f}_i$ (resp. $\lambda_j = \texttt{f}_j$  \& $\lambda_k = \texttt{f}_k$) 
    if $\ell_i$ (resp. $\ell_j$  \& $\ell_k$) is a negative literal.
    See Fig.\ref{fig:Disjoint-QK2-cnx-gadget} for an example.
  \end{itemize}
	Note that for every clause $C$ of $F$, the digraph $D_C$ is the counterexample constructed 
	by Gutin et al.~\cite{Gutin_2004}). It has the important property that any two distinct
	vertices of $\{k_{C, i} \colon 1 \leq i \leq 7\}$ have a common out-neighbour in
	$\{k_{C, i} \colon 1 \leq i \leq 7\}$.

  It is clear that $|V| = 14m + 6n + 6$ and $|A| = 31m + 11n + 9$.
  Moreover, $D$ has maximum out-degree six (but it has unbounded in-degree, see vertex $b'$).
  We claim that the boolean formula $F$ is satisfiable if and only if 
  the digraph $D$ has two disjoint quasi-kernels.

  Suppose that the boolean formula $F$ is satisfiable and consider any satisfying 
  assignment $\varphi$.
  Construct two subsets $Q_1, Q_2 \subseteq V$ as follows.
  \begin{itemize}
    \item
    The elements of $Q_1$ are the following vertices: 
    the vertices $b$ and $b'$ from $D_0$, the vertex $A'_i$ from $D_i$ for every  variable $x_i \in X$, and the vertices
    $k_{C, 6}$, $s_{C, 1}$, $s_{C, 3}$ and $s_{C, 7}$ from $D_C$ for every clause $C$ of $F$.
    \item 
    The elements of $Q_2$ are the following vertices: 
    the vertices $c$ and $c'$ from $D_0$, the vertices $A''_i$ and $\texttt{t}_i$ from $D_i$ for every variable $x_i \in X$ with $\varphi(x_i) = \texttt{true}$, or the vertices $A''_i$ and $\texttt{f}_i$ from $D_i$ with $\varphi(x_i) = \texttt{false}$ and the vertices $k_{C, 7}$, $s_{C, 2}$ and $s_{C, 4}$ from $D_C$ for every clause $C$ of $F$.
  \end{itemize}
  It is a simple matter to check that $Q_1$ and $Q_2$ are disjoint and that both 
  $Q_1$ and $Q_2$ are independent subsets.
  Furthermore, we claim that $Q_1$ and $Q_2$ are two quasi-kernels of $D$.
  The claim is clear for $Q_1$.
  As for $Q_2$, it is enough to show that, for every clause $C$, the vertex $s_{C, 1}$ is 
  at distance at most two of some vertex in $Q_2$.
  Indeed, let $C = \ell_i \vee \ell_j \vee \ell_k$ be a clause where $\ell_i$, $\ell_j$ and 
  $\ell_k$ are positive or negative literals.
  Since $\varphi$ is a satisfying assignment, there exists one literal, say $\ell_i$,
  that evaluates to true in the clause $C$.
  Therefore, 
  if $\varphi(x_i) = \texttt{true}$ then $\texttt{t}_i \in Q_2$ 
  and $(k_{C, 1}, \texttt{t}_i) \in A$, 
  and 
  if $\varphi(x_i) = \texttt{false}$ then $\texttt{f}_i \in Q_2$ 
  and $(k_{C, 1}, \texttt{f}_i) \in A$.
  Conversely, suppose that there exist two disjoint quasi-kernels $Q_1$ and $Q_2$ in $D$.
  Define an assignment $\varphi$ for the boolean formula $F$ as follows:
  for $1 \leq i \leq n$, 
  if $\texttt{t}_i \in Q_2$ then set $\varphi(x_i) = \texttt{true}$;
  otherwise set $\varphi(x_i) = \texttt{false}$.
  Let us show that $\varphi$ is a satisfying assignment.
  We first observe that 
  $Q_1 \cap \{a,b,c\} \neq \varnothing$ and $Q_2 \cap \{a,b,c\} \neq \varnothing$.
  Then it follows that $a' \notin Q_1 \cup Q_2$ (by independence), and hence 
  $b' \in Q_1 \cup Q_2$.
  Without loss of generality, suppose $b' \in Q_1$.
  Therefore, by independence, 
  $\texttt{t}_i \notin Q_1$ and $\texttt{f}_i \notin Q_1$ for $1 \leq i \leq n$.
  
  We need the following claim.
  
  \begin{claim}
  \label{claim:2QK-1}
    We have
    $\{k_{C,1}, k_{C,2}, k_{C,3}, k_{C,5}\} \cap (Q_1 \cup Q_2) = \varnothing$ for every clause $C$ of $F$,
  \end{claim}
  \begin{proof}
  We only prove $k_{C,1} \notin Q_1 \cup Q_2$ 
  (the proof is similar for $k_{C,2} \notin Q_1 \cup Q_2$, $k_{C,3} \notin Q_1 \cup Q_2$ and $k_{C,5} \notin Q_1 \cup Q_2$.)
  Suppose, aiming at a contradiction, that $k_{C,1} \in Q_1 \cup Q_2$.
  Without loss of generality we may assume $k_{C,1} \in Q_1$ (the argument is symmetric if $k_{C,1} \in Q_2$).
  Then it follows that $\{s_{C,2}, s_{C,3}, s_{C,5}\} \subseteq Q_1$,
  and hence $\{s_{C,2}, s_{C,3}, s_{C,5}\} \cap Q_2 = \varnothing$.
  But, for any vertex $k_{C,i}$, $2 \leq i \leq 7$, we can easily check that either
  $d(s_{C,2}, k_{C,i}) > 2$, or $d(s_{C,3}, k_{C,i}) > 2$, or $d(s_{C,5}, k_{C,i}) > 2$.
  Hence $Q_2$ is not a quasi-kernel of $D$.
  This is the sought contradiction.
  \end{proof}

  \begin{claim}
    \label{claim:2QK-2}
    We have
    $s_{C,1} \in Q_1$ for every clause $C$ of $F$.
  \end{claim}

  \begin{proof}
    Suppose, aiming at a contradiction, that  $s_{C,1} \notin Q_1$.
    Combining Claim~\ref{claim:2QK-1} with $\texttt{t}_i \notin Q_1$ and $\texttt{f}_i \notin Q_1$ 
    for $1 \leq i \leq n$, we conclude that no vertex in $Q_1$ is at distance at most two from 
    $s_{C,1}$. Therefore, $Q_1$ is not a quasi-kernel of $D$. This is a contradiction.
  \end{proof}
  
  Let $C = \ell_i \vee \ell_j \vee \ell_k$ be a clause.
  According to Claim~\ref{claim:2QK-2}, we have $s_{C,1} \in Q_1$.
  Furthermore, according to Claim~\ref{claim:2QK-1}, 
  $\{k_{C,1}, k_{C,2}, k_{C,3}, k_{C,5}\} \cap Q_2 = \varnothing$.
  Then it follows that $\{\lambda_i, \lambda_j, \lambda_k\} \cap Q_2 \neq \varnothing$ where $\lambda_i = \texttt{t}_i$ (resp. $\lambda_j = \texttt{t}_j$  \& $\lambda_k = \texttt{t}_k$) 
  if $\ell_i$ (resp. $\ell_j$  \& $\ell_k$) is a positive literal, and $\lambda_i = \texttt{f}_i$ (resp. $\lambda_j = \texttt{f}_j$  \& $\lambda_k = \texttt{f}_k$) 
  if $\ell_i$ (resp. $\ell_j$  \& $\ell_k$) is a negative literal.
  Therefore $\varphi$ is a satisfying assignment.
\end{proof}

Recall that, whereas the small quasi-kernel conjecture is true for sink-free planar digraphs~\cite{kostochka_towards_2020},
no sink-free planar digraph without two disjoint quasi-kernels is known so far.
Moreover, the sink-free digraph constructed in Theorem~\ref{theorem:2QK-1} is obviously not planar
as it uses the counterexample constructed by Gutin et al.~\cite{Gutin_2004} that contains 
an orientation of $K_7$.
This raises the question of deciding whether every sink-free planar digraph has two disjoint 
quasi-kernels.
Although we have not been able to answer to this question, we show that is it \NP-complete 
to distinguish those sink-free planar digraphs that have three disjoint quasi-kernels 
from those that do not (the proof is deferred to Appendix due to the space constraints).

\begin{theorem}
  \label{theorem:3QK}
  Deciding if a digraph has three disjoint quasi-kernels is \NP-complete, 
	even for bounded degree planar digraphs.
\end{theorem}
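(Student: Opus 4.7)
The plan is to reduce from a planar bounded-occurrence variant of \textsc{3-SAT} (which is \NP-complete) and to construct, from a formula $F$, a bounded-degree planar digraph $D_F$ such that $F$ is satisfiable if and only if $D_F$ has three pairwise disjoint quasi-kernels. The overall architecture mirrors that of Theorem~\ref{theorem:2QK-1}, with a variable gadget for each $x_i$ and a clause gadget for each $C$, wired together by a small number of arcs.

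The first step is to design a variable gadget exploiting the fact that a directed triangle $a_i \to b_i \to c_i \to a_i$ has three disjoint singleton quasi-kernels $\{a_i\},\{b_i\},\{c_i\}$. Attaching two ``literal ports'' $\texttt{t}_i$ and $\texttt{f}_i$ with appropriately oriented arcs to such a triangle (and possibly a second linked triangle) so that, in any triple of disjoint quasi-kernels $Q_1,Q_2,Q_3$ of $D_F$, exactly one of $\texttt{t}_i,\texttt{f}_i$ lies in a distinguished member of the triple, encoding the truth value $\varphi(x_i)$. The second step is to replace the non-planar Gutin $K_7$-based clause gadget used in Theorem~\ref{theorem:2QK-1} by a small planar bounded-degree gadget with three ``entry'' vertices $e_{C,1},e_{C,2},e_{C,3}$ having the property that three disjoint quasi-kernels can be extended from the rest of $D_F$ into the gadget if and only if at least one entry is ``activated'' from outside via the appropriate literal port. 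Since we now demand three rather than two disjoint quasi-kernels, local obstructions inside each gadget suffice, so the global hub $b'$ of unbounded in-degree used before can be dispensed with, yielding bounded degree.

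The third step is planar assembly. I would use the planar incidence structure of $F$ (given by planar \textsc{3-SAT}) to place variable gadgets near the clauses in which they occur and route the arcs from each $\texttt{t}_i/\texttt{f}_i$ to the corresponding entries $e_{C,\cdot}$ without crossings; if a reduction from a planar variant that still admits unavoidable crossings is used, a small directed planar crossover gadget (designed so that its local quasi-kernel behaviour transmits ``choices'' independently along the two crossing arcs) is inserted at each crossing. Bounded degree is then immediate from the bounded number of occurrences of each variable and the local nature of all gadgets.

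The main obstacle will be the design of the planar clause gadget and the crossover gadget, together with a clean proof that no ``spurious'' triples of disjoint quasi-kernels exist that do not arise from a genuine satisfying assignment. The two-direction verification itself—from a satisfying assignment to three explicit disjoint quasi-kernels, and from any three disjoint quasi-kernels to a satisfying assignment via a sequence of forcing claims analogous to Claims~\ref{claim:2QK-1} and~\ref{claim:2QK-2}—should then be a routine case analysis, but it must carefully track, for each vertex in each gadget, which of the three quasi-kernels it may belong to and which vertices within distance two witness its covering.
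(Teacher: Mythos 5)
Your proposal is an outline rather than a proof: it defers exactly the steps on which everything hinges. The planar bounded-degree clause gadget whose three disjoint quasi-kernels can be completed if and only if some entry is ``activated,'' the variable gadget forcing exactly one of $\texttt{t}_i,\texttt{f}_i$ into a ``distinguished'' quasi-kernel, and the directed planar crossover gadget are all left unconstructed, with no argument that any of them exists. This is not a routine omission. In the two-quasi-kernel reduction (Theorem~\ref{theorem:2QK-1}) the engine of the hardness proof is the Gutin et al.\ digraph on $K_7$, which intrinsically lacks two disjoint quasi-kernels; your plan needs a \emph{planar, bounded-degree} analogue of such an obstruction for triples, and the paper explicitly notes that even for pairs no planar obstruction is known. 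There is also a symmetry problem you do not address: the three quasi-kernels $Q_1,Q_2,Q_3$ come unordered, so ``a distinguished member of the triple'' is not well defined, and any truth-value encoding of the form $\varphi(x_i)=\texttt{true}$ iff $\texttt{t}_i$ lies in a specific $Q_\ell$ needs a global symmetry-breaking mechanism (the role played by $b'\in Q_1$ in Theorem~\ref{theorem:2QK-1}), which your bounded-degree requirement forbids you from implementing via a single hub of unbounded in-degree, as you yourself observe. Until these gadgets are exhibited and their forcing properties proved, the reduction does not exist.

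The paper's actual proof shows that all of this machinery is unnecessary: it reduces from \textsc{3-Coloring} of planar graphs with maximum degree four, a problem whose ternary structure matches three disjoint quasi-kernels directly. Each vertex $v_i$ of $G$ is replaced by a directed triangle $w_i \to z_{i,1} \to z_{i,2} \to w_i$, and each edge $v_iv_j$ by the digon $(w_i,w_j),(w_j,w_i)$. Since the set of vertices reachable from $z_{i,1}$ by a directed path of length at most two is exactly $\{z_{i,1},z_{i,2},w_i\}$, each of three pairwise disjoint quasi-kernels must contain one vertex of this triple, so some quasi-kernel contains $w_i$; independence across the digons then turns the assignment $v_i \mapsto$ (index of the quasi-kernel containing $w_i$) into a proper $3$-coloring. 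Conversely, a proper $3$-coloring of $G$, extended cyclically along each triangle, partitions the vertex set of $D$ into three disjoint quasi-kernels. Planarity and bounded degree (in- and out-degree at most five) are inherited immediately, with no clause gadgets, no crossover gadgets, and no case analysis. You should either find this kind of reduction from a problem with built-in three-way symmetry, or be prepared to solve the nontrivial gadget-design problems your SAT-based plan raises.
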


\begin{proof}
  Given a planar graph with maximum degree four \textsc{3-Coloring} 
  asks to decide whether there exists a proper vertex coloring of $G$ with three colors
  (\emph{i.e.}, a labeling of the vertices with three colors such that no 
  two distinct vertices incident to a common edge have the same color).
   We reduce from \textsc{3-Coloring} which is known to be \NP-complete for planar graphs with 
  maximum degree four~\cite{dailey_uniqueness_1980}.

  Let $G = (V, E)$ be a planar graph with $n$ vertices, $m$ edges and maximum degree four. We denote by $v_1,v_2,\ldots,v_n$ the vertices of $G$.
  Without loss of generality, we assume that $G$ has no isolated vertex.
  We construct a digraph $D = (V', A)$ as follows.
  For every $1 \leq i \leq n$, we introduce $C_i$ an oriented cycle of length three which
  contains a specified vertex $w_i$.
  For every edge $v_i v_j \in E$, we connect the two gadgets $C_i$ and $C_j$ with 
  two arcs $(w_i, w_j)$ and $(w_j, w_i)$ to $D$.
  More formally, we have
  \begin{alignat*}{3}
  &V' & &= &&\{w_i \colon 1 \leq i \leq n\} \cup \{z_{i, j} \colon 1 \leq i \leq n \text{ and } 1 \leq j \leq 2\} \, ,\\
& A & &=&& \{(w_i, w_j), (w_j, w_i) \colon v_i v_j \in E\} \cup
  \{(w_i, z_{i, 1}), (z_{i, 1}, z_{i, 2}), (z_{i, 2}, w_i) \colon 1 \leq i \leq n\} \, .
  \end{alignat*}

  It is clear that $|V'| = 3n$, $|A| = 2m + 3n$ and that the digraph $D$ is planar.
  Moreover, $D$ has maximum in-degree five and maximum out-degree five.
  See Fig.~\ref{fig:3QK example} for an example.

  \begin{figure}[t!]
    \centering
    \begin{tikzpicture}
    [
      scale=.9,
      >=latex,
      thick,
      vertex/.style={shape=circle,draw=black},
      qk/.style={vertex,fill=black!25}
    ]
    % name
    % \node [] (D) at (.75,-1) {$G$};
    \node [draw,rounded corners,fill=black!5] (G) at (.75,-1) {$G$};

    % vertices
    \node [vertex,ultra thick,label=west:$v_1$,fill=red!100] (v1) at (0,0) {};
    \node [vertex,ultra thick,label=east:$v_2$,fill=blue!100] (v2) at (1.5,0) {};
    \node [vertex,ultra thick,label=east:$v_3$,fill=green!100] (v3) at (1.5,1.5) {};
    \node [vertex,ultra thick,label=west:$v_4$,fill=blue!100] (v4) at (0,1.5) {};
    \node [vertex,ultra thick,label=west:$v_5$,fill=red!100] (v5) at (.75,2.75) {};

    % edges
    \draw (v1) -- (v2);
    \draw (v1) -- (v3);
    \draw (v1) -- (v4);
    \draw (v2) -- (v3);
    \draw (v3) -- (v4);
    \draw (v3) -- (v5);
    \draw (v4) -- (v5);

    \begin{scope}[xshift=6.75cm,yshift=-1cm]
        % name
        % \node [] (D) at (1.5,-2.5) {$D$};
        \node [draw,rounded corners,fill=black!5] (D) at (1.5,-2.5) {$D$};

        % vertices
        \node [vertex,ultra thick,label=west:$w_1$,fill=red!100] (w1) at (0,0) {};
        \node [vertex,ultra thick,label=east:$w_2$,fill=blue!100] (w2) at (3,0) {};
        \node [vertex,ultra thick,label=east:$w_3$,fill=green!100] (w3) at (3,2.5) {};
        \node [vertex,ultra thick,label=west:$w_4$,fill=blue!100] (w4) at (0,2.5) {};
        \node [vertex,ultra thick,label=west:$w_5$,fill=red!100] (w5) at (1.5,4) {};

        \node [vertex,label=south:$z_{1,1}$,fill=blue!100] (z11) at (-1,-1.5) {};
        \node [vertex,label=south:$z_{1,2}$,fill=green!100] (z14) at (1,-1.5) {};

        \node [vertex,label=south:$z_{2,1}$,fill=green!100] (z21) at (2,-1.5) {};
        \node [vertex,label=south:$z_{2,2}$,fill=red!100] (z24) at (4,-1.5) {};

        \node [vertex,label=south:$z_{3,1}$,fill=red!100] (z31) at (4.5,1.5) {}; 
        \node [vertex,label=north:$z_{3,2}$,fill=blue!100] (z34) at (4.5,3.5) {};

        \node [vertex,label=south:$z_{4,1}$,fill=green!100] (z41) at (-1.5,1.5) {};
        \node [vertex,label=north:$z_{4,2}$,fill=red!100] (z44) at (-1.5,3.5) {};

        \node [vertex,label=north:$z_{5,1}$,fill=blue!100] (z51) at (2.5,5.5) {};
        \node [vertex,label=north:$z_{5,2}$,fill=green!100] (z54) at (.5,5.5) {};

        % arcs
        \draw [->] (w1) to [bend left=15] (w2);
        \draw [->] (w2) to [bend left=15] (w1);
        \draw [->] (w1) to [bend left=15] (w3);
        \draw [->] (w3) to [bend left=15] (w1);
        \draw [->] (w1) to [bend left=15] (w4);
        \draw [->] (w4) to [bend left=15] (w1);
        \draw [->] (w2) to [bend left=15] (w3);
        \draw [->] (w3) to [bend left=15] (w2);
        \draw [->] (w3) to [bend left=15] (w4);
        \draw [->] (w4) to [bend left=15] (w3);
        \draw [->] (w3) to [bend left=15] (w5);
        \draw [->] (w5) to [bend left=15] (w3);
        \draw [->] (w4) to [bend left=15] (w5);
        \draw [->] (w5) to [bend left=15] (w4);

        \draw [->] (w1) -- (z11);
        \draw [->] (z11) -- (z14);
        \draw [->] (z14) -- (w1);

        \draw [->] (w2) -- (z21);
        \draw [->] (z21) -- (z24);
        \draw [->] (z24) -- (w2);

        \draw [->] (w3) -- (z31);
        \draw [->] (z31) -- (z34);
        \draw [->] (z34) -- (w3);

        \draw [->] (w4) -- (z41);
        \draw [->] (z41) -- (z44);
        \draw [->] (z44) -- (w4);

        \draw [->] (w5) -- (z51);
        \draw [->] (z51) -- (z54);
        \draw [->] (z54) -- (w5);
    \end{scope}
  
  \end{tikzpicture}
    \caption{\label{fig:3QK example}%
      Example of the construction presented in the proof of Theorem~\ref{theorem:3QK}.
    }
  \end{figure}
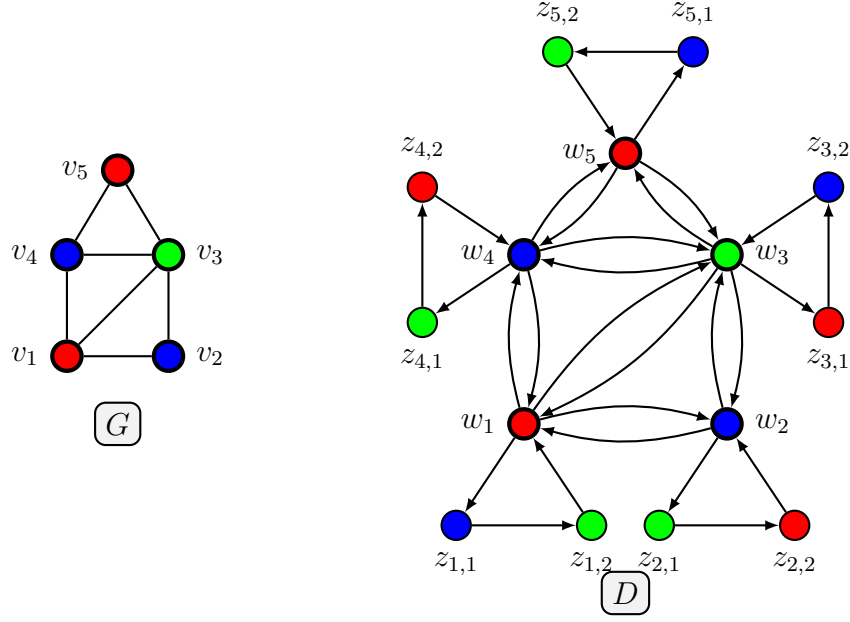

  We claim that $G$ has a proper $3$-coloring if and only if $D$ has three distinct quasi-kernels.

  Suppose first that $G$ has a proper $3$-coloring. 
  Let $C = \{c_1, c_2, c_3\}$ be the three colors.
  Consider the $3$-coloring of $D$ defined as follows: if $v_i$ is colored with color $c_1$ 
  (resp. $c_2$  \& $c_3$) in $G$, then 
  color $w_i$ with color $c_1$ (resp. $c_2$  \& $c_3$), 
  $z_{i,1}$ with color $c_2$ (resp. $c_3$  \& $c_1$), 
  and $z_{i,2}$ with color $c_3$ (resp. $c_1$  \& $c_2$) in $D$.  
  It is clear that the $3$-coloring of $D$ is proper.
  Moreover, for each color $c$ and every vertex $v$ there exist a directed path of length at most two from $v$ to a vertex colored with color $c$. 
  Then it follows that the $3$-coloring of $D$ induces three disjoint quasi-kernels in $D$.
  (Note that these three disjoint quasi-kernels actually form a partition of the vertices of $D$.)

  Conversely, suppose that the digraph $D$ has three disjoint quasi-kernels $Q_1$, $Q_2$ and $Q_3$.
  We have $w_i \in Q_1 \cup Q_2 \cup Q_3$ for every $1 \leq i \leq n$.
  Indeed, $\{z_{i,1}\} \cup N^+(z_{i,1}) \cup N^{+}(N^{+}(z_{i,1})) = \{z_{i,1}, z_{i,2}, w_i\}$.
  Thus  a quasi-kernel necesseraly contains one vertex in $\{z_{i,1}, z_{i,2}, w_i\}$, which implies that one of the three disjoint quasi-kernels countains $w_i$.
  Furthermore, each subset $Q_1$, $Q_2$ and $Q_3$ is independent because it is a quasi-kernel.
  Therefore, the three quasi-kernels $Q_1$, $Q_2$ and $Q_3$ induce a proper $3$-coloring of $G$.
\end{proof}

\section{Minimum size quasi-kernels}
\label{section:Minimum quasi-kernels}
In this section we address the complexity status of \QK and \MinQK for various digraph classes.

Our first result deals with orientations of trees.
Consider an orientation $T$ of a tree. Picking an arbitrary vertex as a root, some quantities related to the minimum size of a quasi-kernel admit a nice recursive scheme with their values for the subtrees obtained by deleting the root.
%Root the tree at an abritrary vertex, some quantities related to the minimum size of a quasi-kernel admit a nice recursive scheme relating them to their values for the subtrees obtained by deleting the root. 
This leads to a natural programming approach for this class of digraph. 
More precisely these quantities are the following, where $T_x$ denotes the subtree rooted at a vertex $x$: 
\begin{itemize}
\item[$\bullet$] $\pi_0(x)$ is the minimum size of a quasi-kernel of $T_x$ containing $x$, 
\item[$\bullet$] $\pi_1(x)$ is the minimum size of a quasi-kernel $Q$ of $T_x$ such that the shorter directed path from $x$ to $Q$ is of length one, %by a child
\item[$\bullet$] $\pi_2(x)$ is the minimum size of a quasi-kernel $Q$ of $T_x$ such that the shorter directed path from $x$ to $Q$ is of length two,%by a grandchild,
\item[$\bullet$] $\rho(x)$ is the minimum size of a subset $S$ of vertices of $T_x$ 
that is independent and such that for every vertex $v\in T_x\setminus \{x\}$, there is a directed path from $v$ to $S$ of length at most two.
\end{itemize}
Observe that $\rho(x)$ is the minimal size of something which is almost a quasi-kernel of $T_x$: 
the only difference with a quasi-kernel is that the existence of a directed path from $x$ to the considered set of length at most two is not required.

The following lemma is the key result for obtaining the dynamic programming algorithm. Checking each of the equation is straightforward. 

\begin{lemma}
\label{lemma_tree}
Let $T$ be an orientation of a tree and let $v$ be an arbitrary vertex. 
The following equations  hold:

\begin{align*}
\pi_0(v) & =1+  \sum_{u\in N^-(v)}\left(\sum_{t\in N^-(u)}\rho(t) + \sum _{t'\in N^+(u)\setminus \{v\}} \min_{i=0,1,2} \pi_i(t') \right) + \sum_{w\in N^+(v)} \min_{i=1,2} \pi_i(w)\, ,\\
\pi_1(v) &= \sum_{u\in N^-(v)}\rho(u) + \min_{w\in N^+(v)} \left(\pi_0(w)+ \sum_{w'\in N^+(v)\setminus \{w\}} \min_{i=0,1,2} \pi_i(w') \right )\, ,\\
\pi_2(v) &= \sum_{u\in N^-(v)} \min _{i=0,1,2} \pi_i(u) + \min_{w \in N^+(v)} \left ( \pi_1(w) + \sum_{w'\in N^+(v) \setminus \{w\} } \min_{i=1,2} \pi_i(w') \right)\, , \\
\rho(v) &= \min\left \{\pi_0(v), \pi_1(v), \pi_2(v), \sum_{u\in N^-(v)} \min_{i=0,1,2} \pi(u) + \sum _{w \in N^+(v)} \pi_2(w) \right \}\, ,
\end{align*}
with the classical convention that a sum over an empty set is $0$ and the minimum over an empty set is $+\infty$.

\end{lemma}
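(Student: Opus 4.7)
The plan is to verify each equation directly by a case analysis on the status of $v$ with respect to the optimal set, exploiting the fact that in a tree rooted at $v$ the subtrees $\{T_u : u\in N^-(v)\cup N^+(v)\}$ are pairwise vertex-disjoint and share no arc. Consequently, any independent set of $T_v$ restricts to independent sets on each $T_u$ and, conversely, the union of such sets together with a choice of whether to include $v$ is automatically independent provided only the local constraints at $v$ hold. Likewise, the only path from a vertex inside $T_u$ to a target outside $T_u$ passes through $u$ and $v$, which determines when a subtree needs a bona fide quasi-kernel (one of the $\pi_i$) versus an almost quasi-kernel ($\rho$). For each equation I would establish a ``$\leq$'' direction (glue together optimal certificates of the quantities on the right-hand side into a feasible set for the left-hand side) and a ``$\geq$'' direction (restrict an optimum to each subtree and argue each restriction is feasible for the matching quantity).

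For $\pi_0(v)$, placing $v\in Q$ contributes the leading $1$, excludes every $u\in N^-(v)$ from $Q$ by independence, and covers such a $u$ at distance one plus each $t\in N^-(u)$ at distance two via $t\to u\to v$; thus $T_t$ only needs $\rho(t)$, whereas every $t'\in N^+(u)\setminus\{v\}$ cannot reach $v$ and so $T_{t'}$ demands an honest quasi-kernel contributing $\min_{i=0,1,2}\pi_i(t')$. Each $w\in N^+(v)$ is excluded from $Q$ by independence, so $T_w$ contributes $\min_{i=1,2}\pi_i(w)$. For $\pi_1(v)$, exactly one witness $w^\star\in N^+(v)$ must lie in $Q$, which the outer $\min_w$ selects by contributing $\pi_0(w^\star)$, while the other out-neighbours are unconstrained ($\min_{i}\pi_i(w')$); the decisive point is that $u\to v\to w^\star$ now covers each $u\in N^-(v)$ at distance at most two, so $T_u$ only needs $\rho(u)$.

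For $\pi_2(v)$ the distance from $v$ to $Q$ must be exactly two, so no $w\in N^+(v)$ lies in $Q$ while some chosen $w^\star$ has an element of $Q$ at distance one from it; this forces $\pi_1(w^\star)$ on the chosen subtree and $\min_{i=1,2}\pi_i(w')$ on the others. Unlike in the $\pi_1$ case, the shortcut $u\to v\to w$ no longer reaches $Q$, so each $T_u$ must handle its root itself, contributing $\min_{i}\pi_i(u)$. For $\rho(v)$, the four terms of the outer minimum correspond to the four possibilities for the distance from $v$ to $S$, namely $0, 1, 2$ (yielding $\pi_0(v), \pi_1(v), \pi_2(v)$) and strictly greater than $2$; in this last case no element of $N^+(v)$ nor any second-neighbour reached through $N^+(v)$ may lie in $S$, which forces $\pi_2(w)$ on every $T_w$, while each $T_u$ is unconstrained at its root and contributes $\min_{i}\pi_i(u)$.

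The only real obstacle is disciplined bookkeeping: for each status of the root one must systematically enumerate which of $\{\pi_0,\pi_1,\pi_2,\rho\}$ governs each subtree and verify both independence across the root and the distance bound for every vertex. Particular attention is warranted in the $\pi_2$ and $\rho$ recursions to avoid inadvertently creating a shorter path from $v$ to the chosen set than allowed, which is precisely why $\pi_2$ (rather than $\min_{i}\pi_i$) appears on $N^+(v)$ in the last term of $\rho$ and why $\pi_1$ (rather than $\pi_0$) is used for the witness in $\pi_2$. Finally, the leaf base case $N^-(v)=N^+(v)=\varnothing$ plugs into the empty-sum and empty-min conventions to yield $\pi_0(v)=1$, $\pi_1(v)=\pi_2(v)=+\infty$ and $\rho(v)=0$, exactly as expected.
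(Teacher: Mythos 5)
Your proposal is correct and matches the paper's intent exactly: the paper offers no written proof beyond declaring that ``checking each of the equations is straightforward,'' and your case analysis---gluing optimal certificates on the disjoint child subtrees for the upper bounds, restricting an optimum for the lower bounds, with the correct bookkeeping of when a subtree needs $\pi_i$ versus $\rho$---is precisely that omitted verification. In particular you correctly identify the two delicate points (the witness in $\pi_2$ contributing $\pi_1(w^\star)$, and $\pi_2(w)$ rather than $\min_i\pi_i(w)$ in the last term of $\rho$) and the leaf base case, so nothing is missing.
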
 

\begin{proposition}
  \label{Poly}
  \MinQK is polynomial time solvable for orientations of trees.
\end{proposition}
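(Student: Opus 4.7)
The plan is to derive from Lemma~\ref{lemma_tree} a bottom-up dynamic programming algorithm on the oriented tree $T$. Pick an arbitrary vertex $r$ as the root of the underlying tree and process the vertices in post-order. At each vertex $v$, compute the four values $\pi_0(v), \pi_1(v), \pi_2(v), \rho(v)$ from the values already computed at the children of $v$, by applying the recursions of Lemma~\ref{lemma_tree}, where the neighborhoods $N^-(v)$ and $N^+(v)$ appearing in the right-hand sides are understood as restricted to the children of $v$ in the rooted tree (since the lemma is applied to the subtree $T_v$, whose only interaction with the rest of $T$ is through the edge to the parent of $v$). After the root has been processed, return $\min\{\pi_0(r), \pi_1(r), \pi_2(r)\}$: any quasi-kernel of $T$ either contains $r$ or has a closest element at distance one or two from $r$, so this minimum is precisely the optimum.

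For the base case, a leaf $v$ satisfies $T_v = \{v\}$, giving $\pi_0(v) = 1$, $\pi_1(v) = \pi_2(v) = +\infty$ and $\rho(v) = 0$ (the empty set trivially satisfies the defining property of $\rho$, since $T_v \setminus \{v\} = \varnothing$). For an internal vertex $v$, each recursion is a sum over children together with, in two cases, a minimum of the form $\min_{w \in N^+(v)}\bigl(\alpha(w) + \sum_{w' \neq w} \beta(w')\bigr)$. Such a minimum is computed in $O(\deg(v))$ time by the standard trick of precomputing $S = \sum_{w \in N^+(v)} \beta(w)$ and then, for each $w$, evaluating $\alpha(w) + S - \beta(w)$. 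Hence the work at $v$ is $O(\deg(v))$ and the total running time is $O\bigl(\sum_v \deg(v)\bigr) = O(n)$, which is polynomial. A minimum quasi-kernel itself, and not merely its size, is recovered by the usual dynamic-programming traceback that remembers, at each vertex, which branch realizes the minima in the recursions for $\pi_1, \pi_2$ and $\rho$.

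Since Lemma~\ref{lemma_tree} already guarantees the correctness of the recursions, there is no real obstacle. The only subtle point to check is that each recursion is applied inside $T_v$, so the neighborhoods in the right-hand sides must be read as intersected with the children of $v$; with that convention, the inductive step is exactly an instance of the lemma and the proposition follows.
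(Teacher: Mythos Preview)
Your proposal is correct and follows essentially the same approach as the paper: root the tree arbitrarily, compute $\pi_0,\pi_1,\pi_2,\rho$ bottom-up via Lemma~\ref{lemma_tree}, and return $\min\{\pi_0(r),\pi_1(r),\pi_2(r)\}$. Your write-up is in fact more detailed than the paper's (base cases, running-time analysis, traceback); the only imprecision is that the recursion for $\pi_0(v)$ also reaches down to grandchildren of $v$, so the per-vertex work is not literally $O(\deg(v))$, though the total remains $O(n)$ after summing.
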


\begin{proof}
  Let $T$ be an orientation of a tree. Let $v$ be an arbitrary vertex of $T$ that plays the role of a root (note that in general $v$ has both in- and out-neighbors). 
  According to Lemma~\ref{lemma_tree}, the  quantities $\pi_0(u)$, $\pi_1(u)$, $\pi_2(u)$ and $\rho(u)$ can be computed bottom-up for all vertices $u$, from the leaves to $v$. The minimum size of a quasi-kernel of $T$ is then $ \min\{\pi_0(v), \pi_1(v), \pi_2(v)\} $.
\end{proof}

The next theorem shows that there is not so much room for extending Proposition~\ref{Poly}.

\begin{theorem}
  \label{NPCcubic}
  \QK is \NP-complete, even for acyclic orientations of cubic graphs. 
\end{theorem}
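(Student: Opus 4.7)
The plan is to give a polynomial-time reduction from a suitably restricted bounded-occurrence variant of \textsc{3-SAT} (the \SAex macro in the preamble suggests \textsc{3SAT2}, i.e.\ \textsc{3-SAT} where each variable occurs in at most two clauses, which is \NP-complete). Membership in \NP is immediate: a candidate quasi-kernel can be verified in polynomial time by checking independence and that every outside vertex reaches the set by a directed path of length at most two. For hardness, the target digraph $D$ must be acyclic and have maximum degree three, and I want to argue that $D$ admits a quasi-kernel of size at most some threshold $k$ (depending on the number of variables and clauses) if and only if the input formula is satisfiable.

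First I would design an acyclic variable gadget $D_i$ for each variable $x_i$ with two distinguished ``output'' vertices $\texttt{t}_i$ and $\texttt{f}_i$ and the following property: any quasi-kernel of $D$, restricted to $D_i$, must contain exactly one of a pair of forced vertices, and the two feasible configurations correspond to selecting $\texttt{t}_i$ (meaning $x_i = \texttt{true}$) or $\texttt{f}_i$ (meaning $x_i = \texttt{false}$) as the vertex available to dominate external clause vertices at distance two. A convenient template is a short directed path with a couple of ``pendant'' sinks that, by acyclicity, are forced into every quasi-kernel; the remaining internal structure then offers exactly two minimum extensions, corresponding to the two truth values. Because $D$ must be cubic, the gadget must be built so that the outputs $\texttt{t}_i,\texttt{f}_i$ have enough residual degree to be connected to clause gadgets, which is why the bounded-occurrence hypothesis \textsc{3SAT2} is convenient: each literal is used at most twice.

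Next I would design a clause gadget $D_C$ for each clause $C = \ell_1 \vee \ell_2 \vee \ell_3$. The gadget will contain a single ``checker'' vertex $y_C$ that must be dominated within two steps, together with a small acyclic tree connecting $y_C$ to the three literal outputs selected from the corresponding variable gadgets (taking $\texttt{t}_i$ if $\ell = x_i$ and $\texttt{f}_i$ if $\ell = \neg x_i$). The arcs are oriented so that $y_C$ reaches each chosen literal vertex via a directed path of length at most two, and the gadget is made acyclic by orienting everything from $y_C$ towards the variable gadgets. I would then count carefully: let $k$ be the total number of sinks and other forced-in vertices over all gadgets, plus exactly one ``choice'' vertex per variable gadget. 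The choice of $k$ is such that every quasi-kernel has size at least $k$, and a quasi-kernel of size exactly $k$ exists if and only if each $y_C$ is dominated in two steps by some selected literal output, which is exactly the condition that the assignment satisfies $C$.

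The main obstacle will be the simultaneous imposition of three restrictions: bounded degree three, acyclicity, and correct propagation of the literal values to all clause gadgets. Acyclicity forces the presence of sinks, which must belong to every quasi-kernel, and this is actually helpful because it lets me force many ``budget'' vertices and reduce the choice to the few per-variable selections I want; but it also limits how I can route arcs between gadgets. The degree-three bound is the tightest constraint: when a variable appears in two clauses, both copies of $\texttt{t}_i$ (or $\texttt{f}_i$) may need to reach two different clause gadgets, and I may need to insert small degree-preserving splitter gadgets (a short acyclic branching structure) that replicate the ``this literal is available'' information without creating cycles or violating the cubic bound. Verifying that these splitters do not change the optimal quasi-kernel size except by a fixed additive contribution, and hence that the threshold $k$ can be computed exactly, is the delicate part and will require case analysis analogous to Claims~\ref{claim:2QK-1} and~\ref{claim:2QK-2} in the proof of Theorem~\ref{theorem:2QK-1}.
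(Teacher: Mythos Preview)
Your high-level plan matches the paper's approach exactly: reduce from a bounded-occurrence variant of \textsc{3-SAT}, build a variable gadget with two ``truth'' outputs and a clause gadget with a checker vertex, connect them by short directed paths, and set the threshold so that a minimum-size quasi-kernel encodes a satisfying assignment. So strategically you are on the right track.

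There are, however, two genuine gaps. First, the SAT variant you start from is wrong: \textsc{3-SAT} where \emph{each variable occurs in at most two clauses} is solvable in polynomial time (Tovey), so your reduction as stated proves nothing. You later slip to ``each literal is used at most twice,'' which is the correct restriction and is exactly what the paper uses (it calls it \textsc{(3,B2)-SAT}: each clause has three distinct literals and each literal occurs exactly twice). This is not a mere bookkeeping issue, because the ``exactly twice per literal'' condition is precisely what makes the output digraph cubic: each literal vertex has one internal arc and two external arcs, for total degree three. With the weaker ``each variable at most twice'' hypothesis you could not reach the cubic bound anyway.

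Second, and more substantially, you have described properties the gadgets should have but not built them, and you explicitly flag the cubic constraint as an unresolved obstacle (``may need to insert small degree-preserving splitter gadgets''). The entire content of the paper's proof is the concrete construction: a $14$-vertex acyclic variable gadget with two sinks $d_{i,1},d_{i,4}$ forced into $Q$ and exactly two ways to extend by two further vertices (picking $x_i$ or $\neg x_i$), and a $21$-vertex acyclic clause gadget with three forced sinks and exactly two further vertices, leaving one of $C_{j,1},C_{j,2},C_{j,3}$ uncovered so that it must be absorbed through a length-two path to a literal vertex. The threshold is $5m+4n$, and the argument that a quasi-kernel of this size forces exactly these configurations (hence a satisfying assignment) is a short counting argument once the gadgets are in hand. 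Until you actually exhibit gadgets with the right sink structure and degree budget, your proposal is an outline, not a proof.
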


\begin{proof}
  
  Given a boolean expression $F$ in conjunctive normal form where each clause is the disjunction of 
  three distinct literals and each literal occurs exactly twice among the clauses, 
  \textsc{(3,B2)-SAT} asks to decide whether $F$ is satisfiable. 
  We reduce from \textsc{(3,B2)-SAT} which is known to be 
  \NP-complete~\cite{DBLP:journals/eccc/ECCC-TR03-049}.

\begin{figure}[h]
  \centering
  \subfloat[Gadget $D_{x_i}$ for boolean variable $x_i$.]{%
     \begin{tikzpicture}
    [
      scale=.9,
      >=latex,
      vertex/.style={shape=circle,draw=black},
      sink/.style={vertex,fill=red!100}
    ]
    % label
    \node (D) at (1,4) {$D_{x_i}$};
    % vertices
    \node [vertex,ultra thick,label=east:$\neg x_i$] (1) at (2,7) {};
    \node [vertex,ultra thick,label=west:$x_i$] (2) at (0,7) {};
    \node [vertex] (3) at (1,8) {};
    \node [vertex,label=west:$d_{i,2}$] (4) at (0,8) {};
    \node [vertex] (12) at (1,9) {};
    \node [vertex,label=east:$d_{i,3}$] (5) at (2,8) {};
    \node [vertex] (7) at (0,9) {};
    \node [vertex] (8) at (2,9) {};
    \node [vertex,label=north:$d_{i,1}$] (9) at (1,10) {};
    \node [vertex,label=north:$d_{i,4}$] (10) at (1,6) {};
    \node [vertex] (11) at (1,5) {};
    \node [vertex] (13) at (1,7) {};
    \node [vertex] (14) at (2,6) {};
    \node [vertex] (15) at (0,6) {};
    % arcs
    \path [->] (1) edge node[left]{}  (5);
    \path [->] (3) edge node[left] {}(4);
    \path [->] (3) edge node[left]  {}(5);
    \path [->] (2) edge node[left] {}(4);
    \path [->] (4) edge node[left]  {}(7);
    \path [->] (5) edge node[left] {}(8);
    \path [->] (7) edge node[left] {}(9);    
    \path [->] (8) edge node[left] {}(9);  
    \path [->] (13) edge node[left] {}(3); 
    \path [->] (12) edge node[left] {}(7); 
    \path [->] (12) edge node[left] {}(8); 
    \path [->] (12) edge node[left] {}(9); 
    \path [->] (11) edge node[left] {}(15); 
    \path [->] (11) edge node[left] {}(14); 
    \path [->] (11) edge node[left] {}(10); 
    \path [->] (15) edge node[left] {}(10);
    \path [->] (14) edge node[left] {}(10);
    \path [->] (15) edge node[left] {}(13);
    \path [->] (14) edge node[left] {}(13);
\end{tikzpicture}
    \label{sub:NPC gadget var}
  }
  \qquad
  \subfloat[Gadget $D_{C_j}$ for clause $C_j$.]{%
    \begin{tikzpicture}
  [
    scale=.9,
    >=latex,
    thick,
    vertex/.style={shape=circle,draw=black},
    qk/.style={vertex,fill=red!100},
    sink/.style={vertex,fill=red!100}
  ]
    
  \node (Dj) at (0,0) {$D_{C_j}$};
  \node [vertex] (1') at (-30:1) {};
  \node [vertex] (1'bis) at (-30:2) {};
  \node [vertex,ultra thick,label=north east:$C_{j,2}$] (Cj2) at (30:1) {};
  \node [vertex] (3') at (90:1) {};
  \node [vertex] (3'bis) at (90:2) {};
  \node [vertex,ultra thick,label=north west:$C_{j,1}$] (Cj1) at (150:1) {};
  \node [vertex] (5') at (210:1) {};
  \node [vertex] (5'bis) at (210:2) {};
  \node [vertex,ultra thick,label=south:$C_{j,3}$] (Cj3) at (270:1) {};

  \node [vertex,label=180:$t_j$] (1'4) at (-30:3) {};
  \node [vertex] (1'2) at (-30:4) {};
  \node [vertex] (1'1) at (-10:3) {};
  \node [vertex] (1'3) at (-50:3) {};

  \node [vertex,label=south:$t'_j$] (3'4) at (90:3) {};
  \node [vertex] (3'2) at (90:4) {};
  \node [vertex] (3'1) at (110:3) {};
  \node [vertex] (3'3) at (70:3) {};

  \node [vertex,label=0:$t''_j$] (5'4) at (210:3) {};
  \node [vertex] (5'2) at (210:4) {};
  \node [vertex] (5'1) at (230:3) {};
  \node [vertex] (5'3) at (190:3) {};

  \path [->] (1'bis) edge node[left]{}  (1');
  \path [->] (3'bis) edge node[left]{}  (3');
  \path [->] (5'bis) edge node[left]{}  (5');
  \path [->] (1') edge node[left]{}  (Cj2);
  \path [->] (3') edge node[left]{}  (Cj2);
  \path [->] (3') edge node[left]{}  (Cj1);
  \path [->] (5') edge node[left]{}  (Cj1);
  \path [->] (5') edge node[left]{}  (Cj3);
  \path [->] (1') edge node[left]{}  (Cj3);

  \path [->] (1'1) edge node[left]{}  (1'bis);
  \path [->] (1'3) edge node[left]{}  (1'bis);
  \path [->] (1'1) edge node[left]{}  (1'4);
  \path [->] (1'3) edge node[left]{}  (1'4);
  \path [->] (1'2) edge node[left]{}  (1'1);
  \path [->] (1'2) edge node[left]{}  (1'4);
  \path [->] (1'2) edge node[left]{}  (1'3);

  \path [->] (3'1) edge node[left]{}  (3'bis);
  \path [->] (3'3) edge node[left]{}  (3'bis);
  \path [->] (3'1) edge node[left]{}  (3'4);
  \path [->] (3'3) edge node[left]{}  (3'4);
  \path [->] (3'2) edge node[left]{}  (3'1);
  \path [->] (3'2) edge node[left]{}  (3'4);
  \path [->] (3'2) edge node[left]{}  (3'3);

  \path [->] (5'1) edge node[left]{}  (5'bis);
  \path [->] (5'3) edge node[left]{}  (5'bis);
  \path [->] (5'1) edge node[left]{}  (5'4);
  \path [->] (5'3) edge node[left]{}  (5'4);
  \path [->] (5'2) edge node[left]{}  (5'1);
  \path [->] (5'2) edge node[left]{}  (5'4);
  \path [->] (5'2) edge node[left]{}  (5'3);

\end{tikzpicture}
    \label{sub:NPC gadget clause}
  }
  \caption{\label{fig:renoNPC gadgets}%
  }
\end{figure}

Consider an instance of \textsc{(3,B2)-SAT}. Denote by $x_1,x_2,\ldots, x_n$ its variables, 
and let $F=C_1 \vee C_2 \vee \dots \vee C_m$ be its formula. 
For every boolean variable $x_i$ occurring in $F$ we introduce a copy $D_{x_i}$ of the gadget shown in
Fig.~\ref{sub:NPC gadget var} which contains two specified vertices $x_i$ and $\neg x_i$.
Furthermore, for every clause $C_j$ of $F$ we introduce a copy $D_{C_j}$ of the gadget shown in 
Fig.~\ref{sub:NPC gadget clause} which contains three specified vertices $C_{j,1}$, $C_{j,2}$ and $C_{j,3}$.

If the literal $x_i$ (resp. $\neg x_i$) occurs in clause $C_j$ as the 
$k$-th literal ($k = 1, 2, 3$), we connect the specified vertex $x_i$ (resp. $\neg x_i$) in $D_{x_i}$
with the specified vertex $C_{j,k}$ in $D_{C_j}$ with a directed path of length two from $x_i$ to $C_{j,k}$.
(For an example, see Fig.~\ref{fig:NP cubic digraph example} 
where $C_j = x_{i_1} \vee \neg x_{i_2} \vee \neg x_{i_3}$.) 
Let $D$ denote the resulting digraph.
We observe that $D$ is an orientation of a cubic graph (since every literal occurs twice)
with $14n + 21m + 3m= 14n + 24m$  vertices.

\begin{figure}[ht!] 
  \centering
  \begin{tikzpicture}
    [
      scale=.9,
      >=latex,
      thick,
      vertex/.style={shape=circle,draw=black},
      qk/.style={vertex,fill=red!100},
      sink/.style={vertex,fill=red!100}
    ]
    \useasboundingbox (-5,-3) rectangle (5,10);

    \node (Dj) at (0,0) {$D_{C_j}$};
    \node [vertex] (1') at (-30:1) {};
    \node [vertex] (1'bis) at (-30:2) {};
    \node [vertex,ultra thick,label=north east:$C_{j,2}$,qk] (Cj2) at (30:1) {};
    \node [vertex] (3') at (90:1) {};
    \node [vertex] (3'bis) at (90:2) {};
    \node [vertex,ultra thick,label=north west:$C_{j,1}$] (Cj1) at (150:1) {};
    \node [vertex] (5') at (210:1) {};
    \node [vertex] (5'bis) at (210:2) {};
    \node [vertex,ultra thick,label=south:$C_{j,3}$,qk] (Cj3) at (270:1) {};
  
    \node [sink] (1'4) at (-30:2.9) {};
    \node [vertex] (1'2) at (-30:4) {};
    \node [vertex] (1'1) at (-10:3) {};
    \node [vertex] (1'3) at (-50:3) {};
  
    \node [sink] (3'4) at (90:2.9) {};
    \node [vertex] (3'2) at (90:4) {};
    \node [vertex] (3'1) at (110:3) {};
    \node [vertex] (3'3) at (70:3) {};
  
    \node [sink] (5'4) at (210:2.9) {};
    \node [vertex] (5'2) at (210:4) {};
    \node [vertex] (5'1) at (230:3) {};
    \node [vertex] (5'3) at (190:3) {};
  
    \path [->] (1'bis) edge node[left]{}  (1');
    \path [->] (3'bis) edge node[left]{}  (3');
    \path [->] (5'bis) edge node[left]{}  (5');
    \path [->] (1') edge node[left]{}  (Cj2);
    \path [->] (3') edge node[left]{}  (Cj2);
    \path [->] (3') edge node[left]{}  (Cj1);
    \path [->] (5') edge node[left]{}  (Cj1);
    \path [->] (5') edge node[left]{}  (Cj3);
    \path [->] (1') edge node[left]{}  (Cj3);
  
    \path [->] (1'1) edge node[left]{}  (1'bis);
    \path [->] (1'3) edge node[left]{}  (1'bis);
    \path [->] (1'1) edge node[left]{}  (1'4);
    \path [->] (1'3) edge node[left]{}  (1'4);
    \path [->] (1'2) edge node[left]{}  (1'1);
    \path [->] (1'2) edge node[left]{}  (1'4);
    \path [->] (1'2) edge node[left]{}  (1'3);
  
    \path [->] (3'1) edge node[left]{}  (3'bis);
    \path [->] (3'3) edge node[left]{}  (3'bis);
    \path [->] (3'1) edge node[left]{}  (3'4);
    \path [->] (3'3) edge node[left]{}  (3'4);
    \path [->] (3'2) edge node[left]{}  (3'1);
    \path [->] (3'2) edge node[left]{}  (3'4);
    \path [->] (3'2) edge node[left]{}  (3'3);
  
    \path [->] (5'1) edge node[left]{}  (5'bis);
    \path [->] (5'3) edge node[left]{}  (5'bis);
    \path [->] (5'1) edge node[left]{}  (5'4);
    \path [->] (5'3) edge node[left]{}  (5'4);
    \path [->] (5'2) edge node[left]{}  (5'1);
    \path [->] (5'2) edge node[left]{}  (5'4);
    \path [->] (5'2) edge node[left]{}  (5'3);

    \begin{scope}[xshift=-6cm,yshift=.5cm]
        \node (Di1) at (1,4.5) {$D_{x_{i_1}}$};
        \node [vertex,ultra thick,label=east:$\neg x_{i_1}$] (nxi1) at (2,7) {};
        \node [vertex,ultra thick,label=west:$x_{i_1}$,qk] (xi1) at (0,7) {};
        \node [vertex] (3) at (1,8) {};
        \node [vertex] (4) at (0,8) {};
        \node [vertex] (12) at (1,9) {};
        \node [vertex,qk] (5) at (2,8) {};
        \node [vertex] (7) at (0,9) {};
        \node [vertex] (8) at (2,9) {};
        \node [sink] (9) at (1,10) {};
        \node [sink] (10) at (1,6) {};
        \node [vertex] (11) at (1,5) {};
        \node [vertex] (13) at (1,7) {};
        \node [vertex] (14) at (2,6) {};
        \node [vertex] (15) at (0,6) {};
        \path [->] (nxi1) edge node[left]{}  (5);
        \path [->] (3) edge node[left] {}(4);
        \path [->] (3) edge node[left]  {}(5);
        \path [->] (xi1) edge node[left] {}(4);
        \path [->] (4) edge node[left]  {}(7);
        \path [->] (5) edge node[left] {}(8);
        \path [->] (7) edge node[left] {}(9);    
        \path [->] (8) edge node[left] {}(9);  
        \path [->] (13) edge node[left] {}(3); 
        \path [->] (12) edge node[left] {}(7); 
        \path [->] (12) edge node[left] {}(8); 
        \path [->] (12) edge node[left] {}(9); 
        \path [->] (11) edge node[left] {}(15); 
        \path [->] (11) edge node[left] {}(14); 
        \path [->] (11) edge node[left] {}(10); 
        \path [->] (15) edge node[left] {}(10);
        \path [->] (14) edge node[left] {}(10);
        \path [->] (15) edge node[left] {}(13);
        \path [->] (14) edge node[left] {}(13);
    \end{scope}

    \begin{scope}[xshift=-1cm,yshift=.5cm]
        \node (Di2) at (1,4.5) {$D_{x_{i_2}}$};
        \node [vertex,ultra thick,label=east:$\neg x_{i_2}$] (nxi2) at (2,7) {};
        \node [vertex,ultra thick,label=west:$x_{i_2}$,qk] (xi2) at (0,7) {};
        \node [vertex] (3) at (1,8) {};
        \node [vertex] (4) at (0,8) {};
        \node [vertex] (12) at (1,9) {};
        \node [vertex,qk] (5) at (2,8) {};
        \node [vertex] (7) at (0,9) {};
        \node [vertex] (8) at (2,9) {};
        \node [sink] (9) at (1,10) {};
        \node [sink] (10) at (1,6) {};
        \node [vertex] (11) at (1,5) {};
        \node [vertex] (13) at (1,7) {};
        \node [vertex] (14) at (2,6) {};
        \node [vertex] (15) at (0,6) {};
        \path [->] (nxi2) edge node[left]{}  (5);
        \path [->] (3) edge node[left] {}(4);
        \path [->] (3) edge node[left]  {}(5);
        \path [->] (xi2) edge node[left] {}(4);
        \path [->] (4) edge node[left]  {}(7);
        \path [->] (5) edge node[left] {}(8);
        \path [->] (7) edge node[left] {}(9);    
        \path [->] (8) edge node[left] {}(9);  
        \path [->] (13) edge node[left] {}(3); 
        \path [->] (12) edge node[left] {}(7); 
        \path [->] (12) edge node[left] {}(8); 
        \path [->] (12) edge node[left] {}(9); 
        \path [->] (11) edge node[left] {}(15); 
        \path [->] (11) edge node[left] {}(14); 
        \path [->] (11) edge node[left] {}(10); 
        \path [->] (15) edge node[left] {}(10);
        \path [->] (14) edge node[left] {}(10);
        \path [->] (15) edge node[left] {}(13);
        \path [->] (14) edge node[left] {}(13);
    \end{scope}

    \begin{scope}[xshift=4cm,yshift=.5cm]
        \node (Di3) at (1,4.5) {$D_{x_{i_3}}$};
        \node [vertex,ultra thick,label=east:$\neg x_{i_3}$,qk] (nxi3) at (2,7) {};
        \node [vertex,ultra thick,label=west:$x_{i_3}$] (xi3) at (0,7) {};
        \node [vertex] (3) at (1,8) {};
        \node [vertex,qk] (4) at (0,8) {};
        \node [vertex] (12) at (1,9) {};
        \node [vertex] (5) at (2,8) {};
        \node [vertex] (7) at (0,9) {};
        \node [vertex] (8) at (2,9) {};
        \node [sink] (9) at (1,10) {};
        \node [sink] (10) at (1,6) {};
        \node [vertex] (11) at (1,5) {};
        \node [vertex] (13) at (1,7) {};
        \node [vertex] (14) at (2,6) {};
        \node [vertex] (15) at (0,6) {};
        \path [->] (nxi3) edge node[left]{}  (5);
        \path [->] (3) edge node[left] {}(4);
        \path [->] (3) edge node[left]  {}(5);
        \path [->] (xi3) edge node[left] {}(4);
        \path [->] (4) edge node[left]  {}(7);
        \path [->] (5) edge node[left] {}(8);
        \path [->] (7) edge node[left] {}(9);    
        \path [->] (8) edge node[left] {}(9);  
        \path [->] (13) edge node[left] {}(3); 
        \path [->] (12) edge node[left] {}(7); 
        \path [->] (12) edge node[left] {}(8); 
        \path [->] (12) edge node[left] {}(9); 
        \path [->] (11) edge node[left] {}(15); 
        \path [->] (11) edge node[left] {}(14); 
        \path [->] (11) edge node[left] {}(10); 
        \path [->] (15) edge node[left] {}(10);
        \path [->] (14) edge node[left] {}(10);
        \path [->] (15) edge node[left] {}(13);
        \path [->] (14) edge node[left] {}(13);
    \end{scope}

    % connecting the gadgets
    \node [vertex] (Cj1 to xi1) at (-7,4.5) {};
    \draw [->] (Cj1) to [bend left=30] (Cj1 to xi1);
    \draw [->] (Cj1 to xi1) to [bend left=10] (xi1);

    \node [vertex] (Cj2 to nxi2) at (3,4.5) {};
    \draw [->] (Cj2) to [bend right=53] (Cj2 to nxi2);
    \draw [->] (Cj2 to nxi2) to [bend right=10] (nxi2);

    \node [vertex] (Cj3 to nxi3) at (7,4.5) {};
    \draw [->] (Cj3) .. controls ($ (Cj3) +(5,-6)$) and ($ (Cj3 to nxi3) +(0,-1)$) .. (Cj3 to nxi3);
    \draw [->] (Cj3 to nxi3) to [bend right=10] (nxi3);

  \end{tikzpicture}
  
  \caption{\label{fig:NP cubic digraph example}%
    Proof of Theorem \ref{NPCcubic}: connecting gadget $D_{C_j}$ to gadgets $D_{x_{i_1}}$, $D_{x_{i_2}}$ 
    and $D_{x_{i_3}}$ 
    for clause $C_j = x_{i_1} \vee \neg x_{i_2} \vee \neg x_{i_3}$.
    Shown here is the assignment 
    $\varphi(x_{i_1}) = \texttt{true}$, 
    $\varphi(x_{i_2}) = \texttt{true}$ and 
    $\varphi(x_{i_3}) = \texttt{false}$,
    and the clause $C_j$ is satisfied by its first literal.
  }
\end{figure}
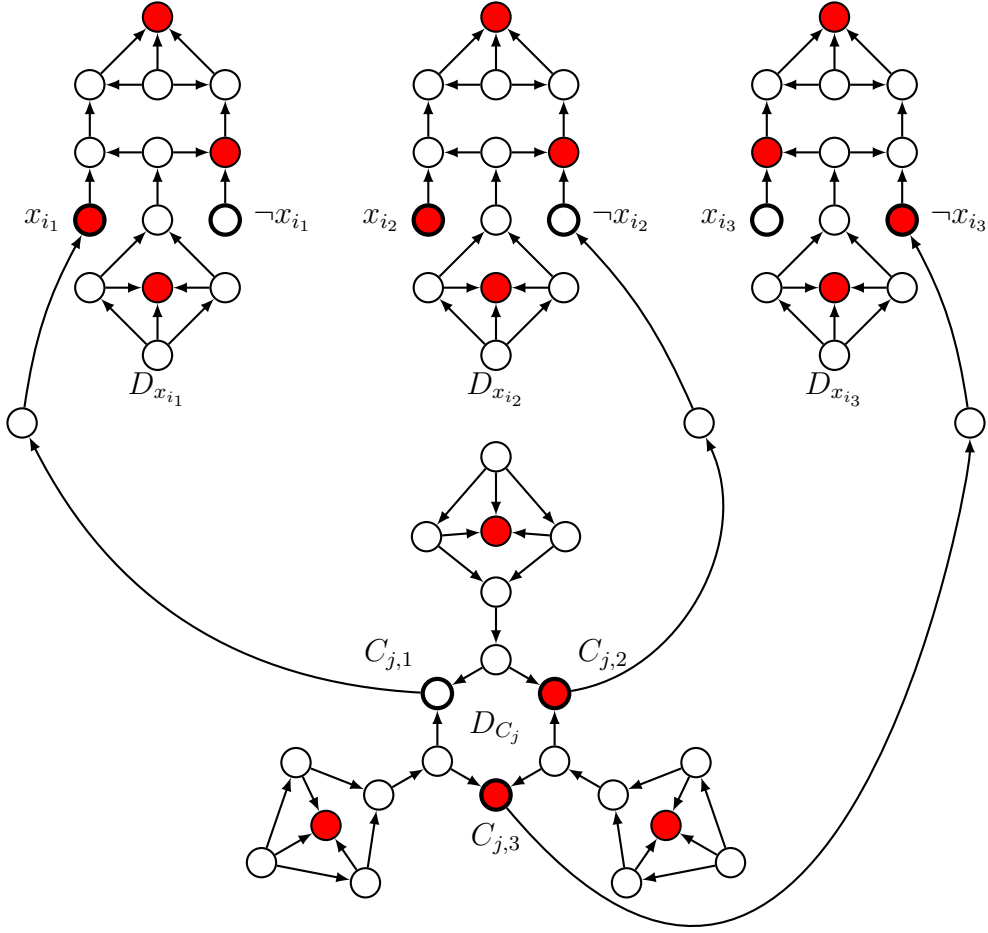

We claim that $F$ is satisfiable if and only if the digraph $D$ has a quasi-kernel of size 
$5m+4n$.

Suppose first that $F$ is satisfiable and consider a satisfying assignment $\varphi$. 
Construct a subset $Q$ of vertices of $D$ as follows.
\begin{itemize}
  \item 
  For $1 \leq j \leq m$, if the clause $C_j$ is satisfied by its first (resp. second \& third) literal,
  add the vertices $C_{j,2}$ and $C_{j,3}$ 
  (resp. $C_{j,1}$ and $C_{j,3}$ \& $C_{j,1}$ and $C_{j,2}$) to $Q$.
  In case the clause $C_j$ is satisfied by more than one literal, we choose one satisfying 
  literal arbitrarily.
  Furthermore, add the three vertices $t_{j}$, $t'_{j}$ and $t''_{j}$ to $Q$.
  \item 
  For $1 \leq i \leq n$, 
  if $\varphi(x_i) = \texttt{true}$ 
  (resp. $\varphi(x_i) = \texttt{false}$) 
  add the vertices $x_i$ and $d_{i,3}$ (resp. $\neg x_i$ and $d_{i,2}$) to $Q$.
  Furthermore, add the two vertices $d_{i,1}$ and $d_{i,4}$ to $Q$.
\end{itemize}
We check at once that $|Q| = 5m + 4n$.
It is now a simple matter to check that $Q$ is a quasi-kernel of $D$.

\begin{figure}[ht!]
  \centering
  \begin{tabular}{ccccc}
  \begin{tikzpicture}
    [
      scale=.9,
      >=latex,
      thick,
      vertex/.style={shape=circle,draw=black},
      qk/.style={vertex,fill=red!100},
      sink/.style={vertex,fill=red!100}
    ]
    % label
    \node (D) at (1,4.5) {$D_{x_i}$};
    % vertices
    \node [qk,ultra thick,label=east:$\neg x_i$] (1) at (2,7) {};
    \node [vertex,ultra thick,label=west:$x_i$] (2) at (0,7) {};
    \node [vertex] (3) at (1,8) {};
    \node [qk] (4) at (0,8) {};
    \node [vertex] (12) at (1,9) {};
    \node [vertex] (5) at (2,8) {};
    \node [vertex] (7) at (0,9) {};
    \node [vertex] (8) at (2,9) {};
    \node [sink] (9) at (1,10) {};
    \node [sink] (10) at (1,6) {};
    \node [vertex] (11) at (1,5) {};
    \node [vertex] (13) at (1,7) {};
    \node [vertex] (14) at (2,6) {};
    \node [vertex] (15) at (0,6) {};
    % arcs
    \path [->] (1) edge node[left]{}  (5);
    \path [->] (3) edge node[left] {}(4);
    \path [->] (3) edge node[left]  {}(5);
    \path [->] (2) edge node[left] {}(4);
    \path [->] (4) edge node[left]  {}(7);
    \path [->] (5) edge node[left] {}(8);
    \path [->] (7) edge node[left] {}(9);    
    \path [->] (8) edge node[left] {}(9);  
    \path [->] (13) edge node[left] {}(3); 
    \path [->] (12) edge node[left] {}(7); 
    \path [->] (12) edge node[left] {}(8); 
    \path [->] (12) edge node[left] {}(9); 
    \path [->] (11) edge node[left] {}(15); 
    \path [->] (11) edge node[left] {}(14); 
    \path [->] (11) edge node[left] {}(10); 
    \path [->] (15) edge node[left] {}(10);
    \path [->] (14) edge node[left] {}(10);
    \path [->] (15) edge node[left] {}(13);
    \path [->] (14) edge node[left] {}(13);
  \end{tikzpicture}
&\qquad&
\begin{tikzpicture}
  [
    scale=.9,
    >=latex,
    thick,
    vertex/.style={shape=circle,draw=black},
    qk/.style={vertex,fill=red!100},
    sink/.style={vertex,fill=red!100}
  ]
  % label
  \node (D) at (1,4.5) {$D_{x_i}$};
  % vertices
  \node [vertex,ultra thick,label=east:$\neg x_i$] (1) at (2,7) {};
  \node [qk,ultra thick,label=west:$x_i$] (2) at (0,7) {};
  \node [vertex] (3) at (1,8) {};
  \node [vertex] (4) at (0,8) {};
  \node [vertex] (12) at (1,9) {};
  \node [qk] (5) at (2,8) {};
  \node [vertex] (7) at (0,9) {};
  \node [vertex] (8) at (2,9) {};
  \node [sink] (9) at (1,10) {};
  \node [sink] (10) at (1,6) {};
  \node [vertex] (11) at (1,5) {};
  \node [vertex] (13) at (1,7) {};
  \node [vertex] (14) at (2,6) {};
  \node [vertex] (15) at (0,6) {};
  % arcs
  \path [->] (1) edge node[left]{}  (5);
  \path [->] (3) edge node[left] {}(4);
  \path [->] (3) edge node[left]  {}(5);
  \path [->] (2) edge node[left] {}(4);
  \path [->] (4) edge node[left]  {}(7);
  \path [->] (5) edge node[left] {}(8);
  \path [->] (7) edge node[left] {}(9);    
  \path [->] (8) edge node[left] {}(9);  
  \path [->] (13) edge node[left] {}(3); 
  \path [->] (12) edge node[left] {}(7); 
  \path [->] (12) edge node[left] {}(8); 
  \path [->] (12) edge node[left] {}(9); 
  \path [->] (11) edge node[left] {}(15); 
  \path [->] (11) edge node[left] {}(14); 
  \path [->] (11) edge node[left] {}(10); 
  \path [->] (15) edge node[left] {}(10);
  \path [->] (14) edge node[left] {}(10);
  \path [->] (15) edge node[left] {}(13);
  \path [->] (14) edge node[left] {}(13);
\end{tikzpicture}
&\qquad&
\begin{tikzpicture}
  [
    scale=.9,
    >=latex,
    thick,
    vertex/.style={shape=circle,draw=black},
    qk/.style={vertex,fill=red!100},
    sink/.style={vertex,fill=red!100}
  ]
  % label
  \node (D) at (1,4.5) {$D_{x_i}$};
  % vertices
  \node [vertex,ultra thick,label=east:$\neg x_i$] (1) at (2,7) {};
  \node [vertex,ultra thick,label=west:$x_i$] (2) at (0,7) {};
  \node [vertex] (3) at (1,8) {};
  \node [qk] (4) at (0,8) {};
  \node [vertex] (12) at (1,9) {};
  \node [qk] (5) at (2,8) {};
  \node [vertex] (7) at (0,9) {};
  \node [vertex] (8) at (2,9) {};
  \node [sink] (9) at (1,10) {};
  \node [sink] (10) at (1,6) {};
  \node [vertex] (11) at (1,5) {};
  \node [vertex] (13) at (1,7) {};
  \node [vertex] (14) at (2,6) {};
  \node [vertex] (15) at (0,6) {};
  % arcs
  \path [->] (1) edge node[left]{}  (5);
  \path [->] (3) edge node[left] {}(4);
  \path [->] (3) edge node[left]  {}(5);
  \path [->] (2) edge node[left] {}(4);
  \path [->] (4) edge node[left]  {}(7);
  \path [->] (5) edge node[left] {}(8);
  \path [->] (7) edge node[left] {}(9);    
  \path [->] (8) edge node[left] {}(9);  
  \path [->] (13) edge node[left] {}(3); 
  \path [->] (12) edge node[left] {}(7); 
  \path [->] (12) edge node[left] {}(8); 
  \path [->] (12) edge node[left] {}(9); 
  \path [->] (11) edge node[left] {}(15); 
  \path [->] (11) edge node[left] {}(14); 
  \path [->] (11) edge node[left] {}(10); 
  \path [->] (15) edge node[left] {}(10);
  \path [->] (14) edge node[left] {}(10);
  \path [->] (15) edge node[left] {}(13);
  \path [->] (14) edge node[left] {}(13);
\end{tikzpicture}
\end{tabular}
  \caption{\label{fig:NP cubic choice}%
    Proof of Theorem \ref{NPCcubic}: truth selection. 
  }
\end{figure}
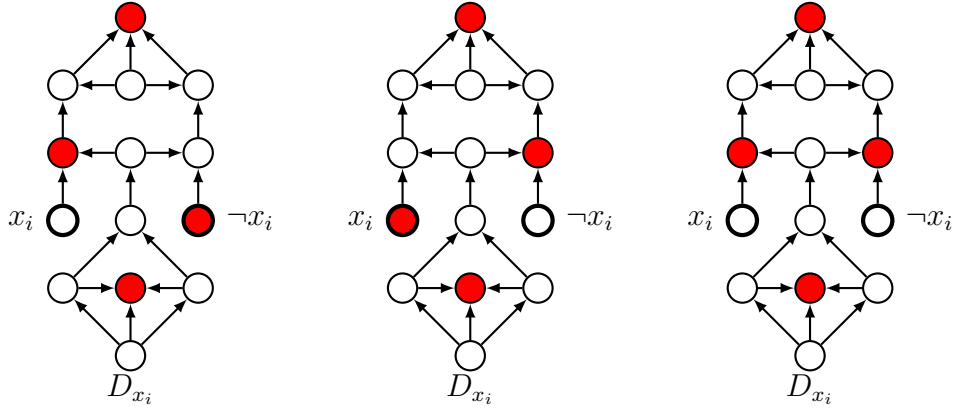

Conversely,
let $Q$ be a quasi-kernel of $D$ of size $5m+4n$.
We first observe that $Q$ contains at least five vertices of each gadget $D_{C_j}$.
We also observe that $Q$ contains at least four vertices of each gadget $D_{x_i}$.
Then it follows that $Q$ contains exactly five vertices of each gadget $D_{C_j}$,
and exactly $4$ vertices of each gadget $D_{x_i}$.
Note that we have actually shown that no mid-vertex of a path of length two
connecting some gadget $D_{C_j}$ to some gadget $D_{x_i}$ is in $Q$.
Furthermore, while we have a fair degree of flexibility in the way the vertices are selected, 
all sinks have to be in $Q$ (by definition). 
In particular, for each $j$, the three sinks $t_j$, $t'_j$ and $t''_j$ are in $Q$,
and for each $i$, the two sinks $d_{i,1}$ and $d_{j,4}$ are in $Q$ as well.
We now turn to defining a truth assignment $\varphi$ for the variable of $F$.
As $Q$ contains exactly four vertices in every gadget $D_{x_i}$, we are 
left to consider the three possibilities depicted in Fig.~\ref{fig:NP cubic choice},
where the grey vertices denote the vertices in $Q$.
The truth assignment $\varphi$ is defined as follows:
$\varphi(x_i) = \texttt{false}$ if and only if $\neg x_i \in Q$.
We claim that $\varphi$ is a satisfying assignment.
Indeed, consider any clause $C_j$.
Combining the observation that $t_j$, $t'_j$ and $t''_j$ are in $Q$ together with
the fact that $Q$ contains exactly five vertices of the gadget $D_{C_j}$,
we conclude that one of $C_{i,1}$, $C_{j,2}$ and $C_{j,3}$ is not in $Q$,
and that this vertex is necessarily absorbed by some vertex $x_i$ or $\neg x_i$ in $Q$.
(Note that $\varphi(x_i) = \texttt{true}$ if and only if $x_i \in Q$ yields another satisfying assignment
for the proposed construction.)
\end{proof} 

Assuming $\text{\FPT} \neq \text{\W[2]}$, 
our next result shows that one cannot confine the seemingly inevitable
combinatorial explosion of computational difficulty to an additive function of the 
size of the quasi-kernel, even for very restricted digraph classes.

\begin{theorem}
  \label{W2bip}
  \QK is \W\textnormal{[2]}-complete,
  even for acyclic orientations of bipartite graphs. 
\end{theorem}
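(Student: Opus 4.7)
The plan is to reduce from \textsc{Dominating Set}, which is $\W[2]$-complete parameterized by the solution size $k$, and to observe separately that \QK lies in $\W[2]$. Membership is routine: a quasi-kernel of size $k$ is described by a $\Sigma_2$-type sentence, where one guesses $k$ vertices (the candidate $Q$) and then a universal block checks independence and that every vertex has a directed walk of length at most two into $Q$, which fits the standard machine characterization of $\W[2]$.

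For hardness, given an instance $(G, k)$ of \textsc{Dominating Set} with $G = (V, E)$ and $V = \{v_1, \ldots, v_n\}$, I will construct a five-layer acyclic digraph $D$. The layers are source vertices $f_j$, witnesses $e_j$, selectors $s_i$, pushers $u_i$ (one pair for every vertex of $G$), and a single terminal $t$. The arcs are $f_j \to e_j$ for every $j$, $e_j \to s_i$ whenever $v_i \in N[v_j]$ in $G$, $s_i \to u_i$, and $u_i \to t$. Placing $\{f_j, s_i, t\}$ in one color class and $\{e_j, u_i\}$ in the other makes the underlying graph bipartite, and since arcs go strictly forward through the layers, $D$ is acyclic. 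The only sink is $t$, so $t$ belongs to every quasi-kernel of $D$.

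The central claim is that $G$ has a dominating set of size at most $k$ if and only if $D$ has a quasi-kernel of size at most $k + 1$. For the forward direction, given a dominating set $S$, the set $Q = \{t\} \cup \{s_i : v_i \in S\}$ is independent (the $s_i$'s lie in one color class and cannot be in $Q$ together with $u_i$, which is adjacent to $t \in Q$), and every vertex reaches $Q$ in at most two steps: $u_i \to t$ in one step, $s_i \to u_i \to t$ in two, $e_j \to s_i$ in one step via the dominating-set property, and $f_j \to e_j \to s_i$ in two.

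The backward direction is where care is needed. The crucial observation is that $f_j$'s unique out-neighbour is $e_j$, and $e_j$'s out-neighbours are the $s_i$ with $v_i \in N[v_j]$; hence for $f_j$ to reach $Q$ in at most two steps, $Q$ must contain $f_j$, $e_j$, or some $s_i$ with $v_i \in N[v_j]$. Letting $I = \{i : s_i \in Q\}$ and $B = \{j : N[v_j] \cap \{v_i : i \in I\} = \varnothing\}$, independence applied to the arc $f_j \to e_j$ forces exactly one of $f_j, e_j$ in $Q$ for every $j \in B$, and disjointness of these contributions with $t$ and the $s_i$'s yields $|Q| \ge 1 + |I| + |B|$; consequently $\{v_i : i \in I\} \cup \{v_j : j \in B\}$ is a dominating set of size at most $|Q| - 1 \le k$. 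The main obstacle is precisely this counting: the design has to prevent the forced sink $t$ from trivially absorbing each $f_j$ within two steps, which is why two intermediate layers separate $e_j$ from $t$, placing $t$ at distance three from $f_j$ and forcing genuine ``dominating'' choices at the $s_i$ level.
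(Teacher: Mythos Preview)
Your reduction is correct. It follows the same blueprint as the paper's---a layered acyclic bipartite digraph with a forced sink $t$, in which size-$(k{+}1)$ quasi-kernels correspond to size-$k$ solutions of a $\W[2]$-complete covering problem---but differs in the details. The paper reduces from \textsc{Set Cover} with the leaner four-layer gadget $u_i \to F_j \to s \to t$ and handles the converse by an exchange argument (among quasi-kernels of size at most $k+1$, pick one minimizing $|Q\cap U|$ and show $Q\cap U=\varnothing$, so that $Q\setminus\{t\}\subseteq\mathcal F$ is a set cover). You instead reduce from \textsc{Dominating Set}, add two padding layers, and read off a dominating set directly via the counting bound $|Q|\ge 1+|I|+|B|$. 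Your extra layers $f_j$ and the per-vertex $u_i$ are not actually needed---dropping the $f_j$ and collapsing all $u_i$ to a single vertex recovers essentially the paper's construction---but they do no harm, and your counting argument is a clean alternative to the exchange step.
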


\begin{proof}
  Membership to \W[2] is easy.
 
  Given $\mathcal{F}$ a family of sets over a universe $U$ and 
  a positive integer $k$, \textsc{Set Cover} consists in deciding if there exists a subfamily 
  $\mathcal{F}' \subseteq \mathcal{F}$ of size at most $k$ such that $\mathcal{F}'$ covers $U$.
   We prove hardness by reducing from \textsc{Set Cover} which is known to be 
  \W[2]-complete~\cite{DBLP:series/txcs/DowneyF13}.

  Let $\mathcal{F} = \{F_1, F_2, \dots, F_m\}$ be a family of sets over some universe 
  $U = \{u_1, u_2, \dots, u_n\}$ and $k$ be a positive integer. 
  Without loss of generality, we may assume $U = \bigcup_{1 \leq j \leq m} F_j$.
  We show how to produce a digraph $D = (V, A)$ such that $\mathcal{F}$ has a set cover of size at most $k$ 
  if and only if $D$ has a quasi-kernel of size at most $k' = k+1$.
  The digraph $D = (V, A)$ is defined as follows:
  \begin{align*}{}
    V &= \mathcal{F} \cup U \cup \{s, t\}\, , \\
    A &= \{(u_i, F_j) \colon F_j \in \mathcal{F} \text{ and } u_i \in F_j\} 
    \cup \{(F_j, s) \colon F_j \in \mathcal{F}\} 
    \cup \{(s,t)\}\, .
  \end{align*}
  It is clear that $|V| = m + n + 2$ and that $|A| = m + 1 + \sum_{1 \leq j \leq m} |F_j|$.

Suppose that there exists a subfamily 
$\mathcal{F}' \subseteq \mathcal{F}$ of size at most $k$ such that $\mathcal{F}'$ covers $U$.
It is clear that $\mathcal{F}' \cup \{t\}$ is a quasi-kernel of $D$ of size $k' = k+1$.

Conversely, suppose that there exists a quasi-kernel of $D$ of size at most $k' = k+1$.
Observe that $t \in Q$ since $t$ is a sink, and $s\notin Q$ (by independence). 
Among these quasi-kernels, choose one $Q$ that minimizes $|Q \cap U|$.
We show that $Q \cap U = \varnothing$. % thereby proving
Indeed, suppose, aiming at a contradiction, that $Q \cap U \neq \varnothing$ and let $u_i \in Q \cap U$.
Furthermore, let $F_j \in N^+(u_i)$ and $U' = N^-(F_j)$.
Since $s \notin Q$ we have
$Q' = \left(Q \setminus U'\right) \cup \{F_j\}$
is a quasi-kernel of $D$ of size at most $k'$ with $|Q' \cap U| < |Q \cap U|$.
This contradicts our assumption, and hence $Q \cap U = \varnothing$.
Then it follows $N^+(u_i) \cap Q \neq \varnothing$ for every $u_i \in U$, and hence
$Q \cap \mathcal{F}$ yields a set cover of size $k' - 1 = k$.
\end{proof}
We remind that a kernel is a quasi-kernel. Actually we have slightly more: a kernel is an inclusion-wise maximal quasi-kernel.  Inclusion-wise minimal quasi-kernels are easy to find with a greedy algorithm. Though, finding a minimum-size quasi-kernel included in a kernel is hard as shown by the following result, whose proof is identical to the one of Theorem~\ref{W2bip}  ($\mathcal{F} \cup \{t\}$ is actually a kernel of 
the digraph $D$).
%Therefore, we have the following result.

\begin{theorem}
  \label{W2bipK}
  Let $D = (V, A)$ be an acyclic orientation of a bipartite graph, $K \subseteq V$ be a kernel of $D$ and 
  $k$ be a positive integer. 
  Deciding whether there exists a quasi-kernel $Q$ included in $K$ of size $k$ is 
  \W\textnormal{[2]}-complete.
\end{theorem}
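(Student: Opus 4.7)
The plan is to follow the remark made just before the statement and exploit almost verbatim the construction of Theorem~\ref{W2bip}. For membership in $\W[2]$, I would observe that the problem admits the usual $\exists^k \forall \exists^{O(1)}$ logical description: guess a set $Q$ of size $k$ inside $K$, check independence by a conjunction over pairs, and check the absorption property by asserting that for every vertex $v$ there is some $w \in Q$ with $v=w$, $(v,w)\in A$, or a length-two path through some intermediate vertex. This is syntactically the same bounded-weft formula used to place ordinary \QK in $\W[2]$, so the extra constraint $Q \subseteq K$ only adds a conjunction of unary conditions and does not increase the weft.

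For $\W[2]$-hardness, I would reduce from \textsc{Set Cover} with the exact same digraph $D=(V,A)$ on vertex set $\mathcal{F}\cup U\cup\{s,t\}$ with arcs $(u_i,F_j)$ for $u_i\in F_j$, $(F_j,s)$ for $F_j\in\mathcal F$ and the single arc $(s,t)$, and I would take as additional input the set $K:=\mathcal{F}\cup\{t\}$. The first thing to verify is that $K$ is indeed a kernel of $D$: it is independent because all arcs go from $U$ to $\mathcal{F}$, from $\mathcal{F}$ to $s$, or from $s$ to $t$, and every vertex outside $K$ has an out-neighbour in $K$, namely every $u_i\in U$ reaches some $F_j\ni u_i$ (using the assumption $U=\bigcup_j F_j$) and $s$ reaches $t$. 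Set $k':=k+1$ as the target quasi-kernel size.

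The forward direction is immediate: given a cover $\mathcal{F}'\subseteq\mathcal{F}$ of size at most $k$, the set $\mathcal{F}'\cup\{t\}$ is an independent subset of $K$ of size at most $k+1$ and, since every $u_i$ has an out-neighbour in $\mathcal{F}'$ and both $s$ and every $F_j\notin\mathcal{F}'$ reach $t$ in at most two steps, it is a quasi-kernel of $D$ contained in $K$. For the converse, suppose $Q\subseteq K$ is a quasi-kernel of $D$ with $|Q|\le k+1$. Since $t$ is a sink, $t\in Q$; and since $Q\subseteq K=\mathcal{F}\cup\{t\}$ we automatically get $Q\cap U=\varnothing$, so the ``swap'' argument that complicates the proof of Theorem~\ref{W2bip} is not needed here. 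For any $u_i\in U$ the only length-two out-path leads through some $F_j$ to $s\notin K$, hence reaching $Q$ in at most two steps forces the existence of some $F_j\in Q\cap N^+(u_i)$, that is $u_i\in F_j$. Therefore $Q\setminus\{t\}\subseteq\mathcal{F}$ covers $U$ and has size at most $k$, yielding the required set cover.

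The only delicate point, and where I would be most careful, is making sure that the definition of $K$ is indeed a kernel of $D$ and that membership in $\W[2]$ is argued cleanly given the extra set-inclusion constraint; both are routine but must be stated explicitly so that the theorem's premise (a kernel $K$ is \emph{given} as part of the input) is consistent with the reduction. Once this is checked, the rest of the proof is an almost word-for-word copy of the proof of Theorem~\ref{W2bip}, as announced.
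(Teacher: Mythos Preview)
Your proposal is correct and follows exactly the approach the paper intends: reuse the Set Cover reduction of Theorem~\ref{W2bip} and observe that $K=\mathcal{F}\cup\{t\}$ is a kernel of the constructed digraph. Your observation that the constraint $Q\subseteq K$ makes the converse direction even simpler (since $Q\cap U=\varnothing$ is automatic and the swap argument becomes unnecessary) is a nice touch that the paper leaves implicit.
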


Raz and Safra~\cite{DBLP:conf/stoc/RazS97} have shown that \textsc{Set Cover} cannot be approximated in 
polynomial time to within a factor of $c \ln(|U|)$ for some constant $c$ unless 
$\P = \NP$.
Moreover, they built an instance of \textsc{Set Cover} 
where the number of subsets is a polynomial of the universe size.
Therefore, the construction used in the proof of Theorem~\ref{W2bip} allows us to 
state the following inapproximability result.

\begin{theorem}
  \label{th:inapproximability}
  \MinQK
  cannot be approximated in polynomial time to within a factor of $c \ln(|V|)$ for some constant $c$ 
  unless $\P = \NP$,
  even for acyclic orientations of bipartite graphs. 
\end{theorem}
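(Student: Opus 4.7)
The plan is to reuse the reduction from the proof of Theorem~\ref{W2bip} essentially verbatim and combine it with the Raz--Safra hardness. Given a Set Cover instance $(\mathcal{F}, U)$, the reduction produces in polynomial time an acyclic orientation $D = (V,A)$ of a bipartite graph with $|V| = m + n + 2$. The proof of Theorem~\ref{W2bip} already establishes the two facts I need: (i) if $\mathcal{F}$ admits a cover of size $k$, then $D$ has a quasi-kernel of size $k+1$; and (ii) from \emph{any} quasi-kernel $Q$ of $D$ one can construct, in polynomial time, a quasi-kernel $Q^\star$ with $Q^\star\cap U = \varnothing$, $|Q^\star|\le |Q|$, yielding a set cover $Q^\star\cap\mathcal{F}$ of $U$ of size $|Q^\star|-1$ (by iteratively swapping each $u_i\in Q\cap U$ for some $F_j\in N^+(u_i)$ and removing $N^-(F_j)$, as in the proof of Theorem~\ref{W2bip}). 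Consequently $\opt_{\QK}(D) = \opt_{SC}(\mathcal{F}) + 1$ and the reduction preserves approximation ratios up to a $+1$ additive term and a constant factor.

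Next, I invoke Raz--Safra~\cite{DBLP:conf/stoc/RazS97}: there is a constant $c_0 > 0$ such that, unless $\P = \NP$, Set Cover admits no polynomial-time $c_0\ln|U|$-approximation, and the hard instances produced have $m = |\mathcal{F}|$ bounded by a fixed polynomial in $n = |U|$. Therefore $|V| = m + n + 2 \le n^{O(1)}$, so there is a constant $K$ with $\ln|V| \le K\ln|U|$ for all sufficiently large instances.

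Now suppose, aiming at a contradiction, that some polynomial-time algorithm $\mathcal{A}$ approximates \MinQK within a factor $c\ln|V|$ on acyclic orientations of bipartite graphs. Running $\mathcal{A}$ on $D$ and converting the output as in step (ii) yields a set cover $\mathcal{F}'$ of size
\[
|\mathcal{F}'| \;\le\; c\ln|V|\cdot\bigl(\opt_{SC}(\mathcal{F})+1\bigr) - 1.
\]
Dividing by $\opt_{SC}(\mathcal{F})\ge 1$ (we may assume the Set Cover instance is nontrivial), the approximation ratio is at most $c\ln|V| + (c\ln|V|-1)/\opt_{SC}(\mathcal{F}) \le 2c\ln|V| \le 2cK\ln|U|$. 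Choosing $c := c_0/(2K)$ produces a polynomial-time $c_0\ln|U|$-approximation for Set Cover, contradicting Raz--Safra.

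The only delicate points are the additive $+1$ coming from the sink $t$ and the logarithmic slack between $\ln|V|$ and $\ln|U|$; both are absorbed harmlessly by tuning the constant $c$, using the polynomial bound $m = n^{O(1)}$ guaranteed by the Raz--Safra construction. I expect no genuine obstacle: once the reduction of Theorem~\ref{W2bip} is observed to be approximation-preserving (up to these additive/multiplicative constants) and the polynomial size bound of the Raz--Safra instances is recalled, the result is immediate.
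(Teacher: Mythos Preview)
Your proposal is correct and follows exactly the approach sketched in the paper: reuse the Set Cover reduction of Theorem~\ref{W2bip}, observe that it is approximation-preserving (with $\opt_{\QK}(D)=\opt_{SC}(\mathcal{F})+1$), and invoke Raz--Safra together with the fact that their hard instances satisfy $m=n^{O(1)}$ so that $\ln|V|=\Theta(\ln|U|)$. You have simply fleshed out the details that the paper leaves implicit.
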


%To go one step further, we focus
Our last result focuses on even more restricted classes of digraph, namely acyclic digraphs with bounded degrees. We need a preliminary lemma which we state for general digraphs.

\begin{lemma}
  \label{lemma:QK in APX}
  \QK belongs to \APX\ for digraphs with fixed maximum in-degrees.
\end{lemma}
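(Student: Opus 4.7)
The plan is to combine the known polynomial-time algorithm that produces \emph{some} quasi-kernel with a one-line counting lower bound on the size of an optimal one. First, recall that the Chvátal--Lovász existence proof~\cite{10.1007/BFb0066192} can be turned into a polynomial-time algorithm that outputs a quasi-kernel $Q$ of any input digraph $D=(V,A)$; in particular $|Q|\le|V|$.

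Next, I would lower-bound the size of an optimal quasi-kernel $Q^\star$ in terms of $|V|$ and $\Delta^-$, the (fixed) maximum in-degree of $D$. For any vertex $q\in V$, the set of vertices $v$ with $d(v,q)\le 2$ is contained in $\{q\}\cup N^-(q)\cup N^-(N^-(q))$, hence has cardinality at most $1+\Delta^-+(\Delta^-)^2$. Since the defining property of $Q^\star$ forces every $v\in V$ to be at distance at most two from some $q\in Q^\star$, a simple double-counting gives
\[
|V|\;\le\;\sum_{q\in Q^\star}\bigl|\{v\in V:d(v,q)\le 2\}\bigr|\;\le\;\bigl(1+\Delta^-+(\Delta^-)^2\bigr)\,|Q^\star|.
\]

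Putting the two inequalities together, the Chvátal--Lovász algorithm returns a quasi-kernel $Q$ with
\[
|Q|\;\le\;|V|\;\le\;\bigl(1+\Delta^-+(\Delta^-)^2\bigr)\,|Q^\star|,
\]
so it is a polynomial-time $\alpha$-approximation with $\alpha=1+\Delta^-+(\Delta^-)^2$, a constant as soon as $\Delta^-$ is bounded. This places \MinQK in \APX\ on digraphs of fixed maximum in-degree, proving the lemma.

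I do not anticipate a real obstacle: the whole argument is a volume bound. The only point that deserves care in the write-up is that the hypothesis constrains the \emph{in}-degree, not the out-degree, which is exactly what is needed because "vertices at distance at most two to $q$" is a statement about iterated in-neighborhoods of the elements of $Q^\star$.
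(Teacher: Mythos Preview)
Your proposal is correct and follows essentially the same approach as the paper: both bound $|V|$ by $(1+\Delta^-+(\Delta^-)^2)\,|Q^\star|$ via the in-neighborhood count and then invoke the Chv\'atal--Lov\'asz algorithm as a constant-factor approximation. Your write-up is in fact slightly more explicit than the paper's about the double-counting step and about why it is the in-degree (not the out-degree) that matters.
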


\begin{proof}
  Let $D = (V, A)$ be a digraph and $Q \subseteq V$ be a quasi-kernel. 
  It is clear that $(d^2 + d + 1) |Q| \geq |V|$,
  where $d$ is the maximum in-degree of $D$.
  Then it follows that any polynomial time algorithm that computes a quasi-kernel (such as the algorithm proposed by 
  Chv\'atal and Lov\'asz~\cite{10.1007/BFb0066192})
  is a $(d^2 + d + 1)$-approximation algorithm.
\end{proof}

\begin{proposition}
  \label{proposition:APX-complete}
  \QK is \APX-complete, even for acyclic digraphs with maximum in-degree three.
\end{proposition}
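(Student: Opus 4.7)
Membership in \APX\ is immediate from Lemma~\ref{lemma:QK in APX} applied with $d = 3$, which yields a $(3^2+3+1)=13$-approximation via the algorithm of Chv\'atal and Lov\'asz. The remaining task is to establish \APX-hardness.

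I plan an L-reduction from \textsc{Vertex Cover} on cubic graphs, which is known to be \APX-hard. A cubic graph $G = (V_G, E_G)$ is encoded as a Set Cover instance with universe $U = E_G$ and family $\mathcal{F} = \{F_v : v \in V_G\}$, where $F_v$ is the set of edges incident to $v$; each $|F_v| = 3$ and each element has frequency exactly $2$. The digraph follows the shape of the construction in the proof of Theorem~\ref{W2bip}, but the single high-in-degree absorber $s$ and sink $t$ are replaced by a private two-vertex tail per set. Explicitly,
\begin{align*}
V(D) & = U \,\cup\, \mathcal{F} \,\cup\, \{a_j, b_j : F_j \in \mathcal{F}\}, \\
A(D) & = \{(u_i, F_j) : u_i \in F_j\} \,\cup\, \{(F_j, a_j), (a_j, b_j) : F_j \in \mathcal{F}\}.
\end{align*}
The digraph $D$ is acyclic, and its in-degrees are $0$ (at elements), $|F_j| \leq 3$ (at sets), and $1$ (at tail vertices); so its maximum in-degree is three.

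The analysis parallels that of Theorem~\ref{W2bip}. Every sink $b_j$ lies in any quasi-kernel $Q$, whereupon $a_j \notin Q$ by independence. Thus every $F_j \notin Q$ is absorbed at distance two via its own tail, while an element $u_i \notin Q$ can only be absorbed by some $F_j \in Q$ containing $u_i$ (distance one), since $a_j$ is excluded. The same swap as in the proof of Theorem~\ref{W2bip}---replace any $u_i \in Q$ by an out-neighbour $F_j$, removing $N^-(F_j)$ from $Q$---still yields a quasi-kernel, because $F_j$'s only out-neighbour is $a_j \notin Q$, so independence is preserved. Hence we may assume $Q \cap U = \varnothing$ and that $Q \cap \mathcal{F}$ is a set cover. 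Therefore $\operatorname{opt}(D) = |\mathcal{F}| + \operatorname{opt}(\mathcal{F})$, and from any quasi-kernel $Q$ of $D$ one recovers in polynomial time a set cover of size $|Q| - |\mathcal{F}|$, so the additive gap is preserved exactly, giving $\beta = 1$.

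To confirm the reduction is an L-reduction, note that on a cubic graph the vertex cover satisfies $\tau(G) \geq |E_G|/3 = |V_G|/2$, so $|\mathcal{F}| \leq 2 \operatorname{opt}(\mathcal{F})$ and $\operatorname{opt}(D) \leq 3 \operatorname{opt}(\mathcal{F})$, giving $\alpha = 3$. This completes the \APX-completeness of \QK on acyclic digraphs with maximum in-degree three. The only technically delicate point I foresee is checking that the swap step of Theorem~\ref{W2bip} still outputs a valid quasi-kernel in the modified digraph; this goes through because the sole obstruction in the original argument was the global absorber $s$, which has now been split into pairwise independent private tails $a_j, b_j$.
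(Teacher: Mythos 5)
Your proof is correct and is essentially the paper's own argument: the paper likewise establishes membership via Lemma~\ref{lemma:QK in APX} and then $L$-reduces from \textsc{Vertex Cover} on cubic graphs, using a per-vertex directed path $w_i \to w'_i \to w''_i$ (your $F_v \to a_v \to b_v$) together with a per-edge vertex pointing to both endpoint gadgets (your $u_e$), and the same normalization swaps to remove edge/element vertices from the quasi-kernel, yielding $\beta = 1$. The only differences are minor: the paper prepends an extra source $z'_e$ to each edge vertex (so its edge gadget has two vertices and needs one more normalization step) and uses the weaker bound $\tau \geq n/4$ to get $\alpha = 5$, whereas your single-source edge gadget and the bound $\tau \geq n/2$ give the slightly better $\alpha = 3$.
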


\begin{proof}
  Membership to \APX\ for acyclic digraphs with fixed in-degrees
  follows from Lemma~\ref{lemma:QK in APX}.
  Specifically, \QK for acyclic digraphs with maximum in-degree three
  can be approximated in polynomial time to within a factor of $13$.

  To prove hardness, we $L$-reduce~\cite{DBLP:journals/jcss/PapadimitriouY91} 
  from \textsc{Vertex Cover} in cubic graphs 
  which is known to be \APX-complete~\cite{DBLP:conf/ciac/AlimontiK97}.
  Let $f$ be the following $L$-reduction from
  \textsc{Vertex Cover} in cubic graphs to \QK with maximum in-degree three.
  Given a cubic graph $G = (V, E)$ with $V=[n]$ and $m$ edges, 
  we construct a digraph $D = (V', A)$ as follows:
  \begin{alignat*}{3}
    &V'&&= \{w_i, w'_i, w''_i \colon 1 \leq i \leq n\}
    \cup \{z_e, z'_e \colon e \in E\}\, , \\
    &A&&= \{(w_i, w'_i), (w'_i, w''_i) \colon 1 \leq i \leq n\} 
    \cup \{(z'_e, z_e), (z_e, w_i), (z_e, w_j) \colon e = ij \in E\}\, .
  \end{alignat*}
  Note that the vertices $w''_i$ are sinks in $D$.
  It is clear that $|V'| = 3n + 2m$, $|A| = 2n + 3m$ and, since $G$ is a cubic graph, 
  that every vertex has maximum in-degree three in $D$
  (we also observe that the maximum out-degree is two in $D$).
  See Fig.~\ref{fig:APX-complete} for an example.
  \begin{figure}[ht!]
    \centering
    \begin{tikzpicture}
    [
      scale=.9,
      >=latex,
      thick,
      vertex/.style={shape=circle,draw=black},
      qk/.style={vertex,fill=red!100},
      sink/.style={vertex,fill=red!100}
    ]
    % name
    \node [] (G) at (1,-1) {$G$};
    % vertices
    \node [vertex,label=south:$1$,qk] (v1) at (0,0) {};
    \node [vertex,label=south:$2$] (v2) at (2,0) {};
    \node [vertex,label=east:$3$,qk] (v3) at (1,1) {};
    \node [vertex,label=east:$4$,qk] (v4) at (1,2) {};
    \node [vertex,label=north:$5$] (v5) at (0,3) {};
    \node [vertex,label=north:$6$,qk] (v6) at (2,3) {};

    % arcs
    \draw [-] (v1) -- (v2);
    \draw [-] (v2) -- (v3);    
    \draw [-] (v3) -- (v1);
    \draw [-] (v4) -- (v5);
    \draw [-] (v5) -- (v6);    
    \draw [-] (v6) -- (v4);
    \draw [-] (v1) -- (v5);
    \draw [-] (v2) -- (v6);
    \draw [-] (v3) -- (v4);

    \begin{scope}[xshift=7.5cm,yshift=-1.5cm]
        % name
        \node [] (D) at (2,-2) {$D$};

        % vertices
        \node [vertex,label=south:$w_1$,qk] (w1) at (0,0) {};
        \node [vertex,label=south:$w_2$] (w2) at (4,0) {};
        \node [vertex,label=east:$w_3$,qk] (w3) at (2,2) {};
        \node [vertex,label=east:$w_4$,qk] (w4) at (2,4) {};
        \node [vertex,label=north:$w_5$] (w5) at (0,6) {};
        \node [vertex,label=north:$w_6$,qk] (w6) at (4,6) {};

        \node [vertex,label=south:$w'_1$] (w1a) at ($ (w1) +(-1,0)$) {};
        \node [sink,label=south:$w''_1$] (w1b) at ($ (w1) +(-2,0)$) {};
        \node [vertex,label=south:$w'_2$] (w2a) at ($ (w2) +(1,0)$) {};
        \node [sink,label=south:$w''_2$] (w2b) at ($ (w2) +(2,0)$) {};
        \node [vertex,label=north:$w'_3$] (w3a) at ($ (w3) +(-1,0)$) {};
        \node [sink,label=north:$w''_3$] (w3b) at ($ (w3) +(-2,0)$) {};
        \node [vertex,label=south:$w'_4$] (w4a) at ($ (w4) +(-1,0)$) {};
        \node [sink,label=south:$w''_4$] (w4b) at ($ (w4) +(-2,0)$) {};
        \node [vertex,label=north:$w'_5$] (w5a) at ($ (w5) +(-1,0)$) {};
        \node [sink,label=north:$w''_5$] (w5b) at ($ (w5) +(-2,0)$) {};
        \node [vertex,label=north:$w'_6$] (w6a) at ($ (w6) +(1,0)$) {};
        \node [sink,label=north:$w''_6$] (w6b) at ($ (w6) +(2,0)$) {};

        \node [vertex,label=east:$z_{13}$] (z13a) at ($ (w1) +(+1,+1)$) {};
        \path (z13a) ++(140:1cm) node [vertex,label=south west:$z'_{13}$] (z13b) {}; 
        \node [vertex,label=north:$z_{12}$] (z12a) at ($ (w1) +(+2,0)$) {};
        \path (z12a) ++(-90:1cm) node [vertex,label=east:$z'_{12}$] (z12b) {}; 
        \node [vertex,label=west:$z_{23}$] (z23a) at ($ (w2) +(-1,+1)$) {};
        \path (z23a) ++(40:1cm) node [vertex,label=south east:$z'_{23}$] (z23b) {}; 
        \node [vertex,label=west:$z_{34}$] (z34a) at ($ (w3) +(0,+1)$) {};
        \path (z34a) ++(0:1cm) node [vertex,label=east:$z'_{34}$] (z34b) {}; 
        \node [vertex,label=east:$z_{45}$] (z45a) at ($ (w5) +(+1,-1)$) {};
        \path (z45a) ++(220:1cm) node [vertex,label=north west:$z'_{45}$] (z45b) {}; 
        \node [vertex,label=south:$z_{56}$] (z56a) at ($ (w5) +(+2,0)$) {};
        \path (z56a) ++(90:1cm) node [vertex,label=east:$z'_{56}$] (z56b) {}; 
        \node [vertex,label=west:$z_{46}$] (z46a) at ($ (w6) +(-1,-1)$) {};
        \path (z46a) ++(-40:1cm) node [vertex,label=north east:$z'_{46}$] (z46b) {}; 
        \node [vertex,label=east:$z_{15}$] (z15a) at ($ (z34a) +(-3,0)$) {};
        \path (z15a) ++(180:1cm) node [vertex,label=north:$z'_{15}$] (z15b) {}; 
        \node [vertex,label=west:$z_{26}$] (z26a) at ($ (z34a) +(3,0)$) {};
        \path (z26a) ++(0:1cm) node [vertex,label=north:$z'_{26}$] (z26b) {}; 
        % arcs
        % pendant
        \draw [->] (w1) -- (w1a);
        \draw [->] (w1a) -- (w1b);
        \draw [->] (w2) -- (w2a);
        \draw [->] (w2a) -- (w2b);
        \draw [->] (w3) -- (w3a);
        \draw [->] (w3a) -- (w3b);
        \draw [->] (w4) -- (w4a);
        \draw [->] (w4a) -- (w4b);
        \draw [->] (w5) -- (w5a);
        \draw [->] (w5a) -- (w5b);
        \draw [->] (w6) -- (w6a);
        \draw [->] (w6a) -- (w6b);
        % split arcs
        \draw [->] (z13b) -- (z13a);
        \draw [->] (z13a) -- (w1);
        \draw [->] (z13a) -- (w3);
        \draw [->] (z12b) -- (z12a);
        \draw [->] (z12a) -- (w1);
        \draw [->] (z12a) -- (w2);
        \draw [->] (z23b) -- (z23a);
        \draw [->] (z23a) -- (w2);
        \draw [->] (z23a) -- (w3);
        \draw [->] (z34b) -- (z34a);
        \draw [->] (z34a) -- (w3);
        \draw [->] (z34a) -- (w4);
        \draw [->] (z45b) -- (z45a);
        \draw [->] (z45a) -- (w4);
        \draw [->] (z45a) -- (w5);
        \draw [->] (z56b) -- (z56a);
        \draw [->] (z56a) -- (w5);
        \draw [->] (z56a) -- (w6);
        \draw [->] (z46b) -- (z46a);
        \draw [->] (z46a) -- (w4);
        \draw [->] (z46a) -- (w6);
        \draw [->] (z15b) -- (z15a);
        \draw [->] (z15a) to [bend right=15] (w1);
        \draw [->] (z15a) to [bend left=15] (w5);
        \draw [->] (z26b) -- (z26a);
        \draw [->] (z26a) to [bend left=15] (w2);
        \draw [->] (z26a) to [bend right=15] (w6);
    \end{scope}
  \end{tikzpicture}
  
    \caption{\label{fig:APX-complete}%
      Example of the construction presented in the proof of Theorem~\ref{proposition:APX-complete}.
    }
  \end{figure}
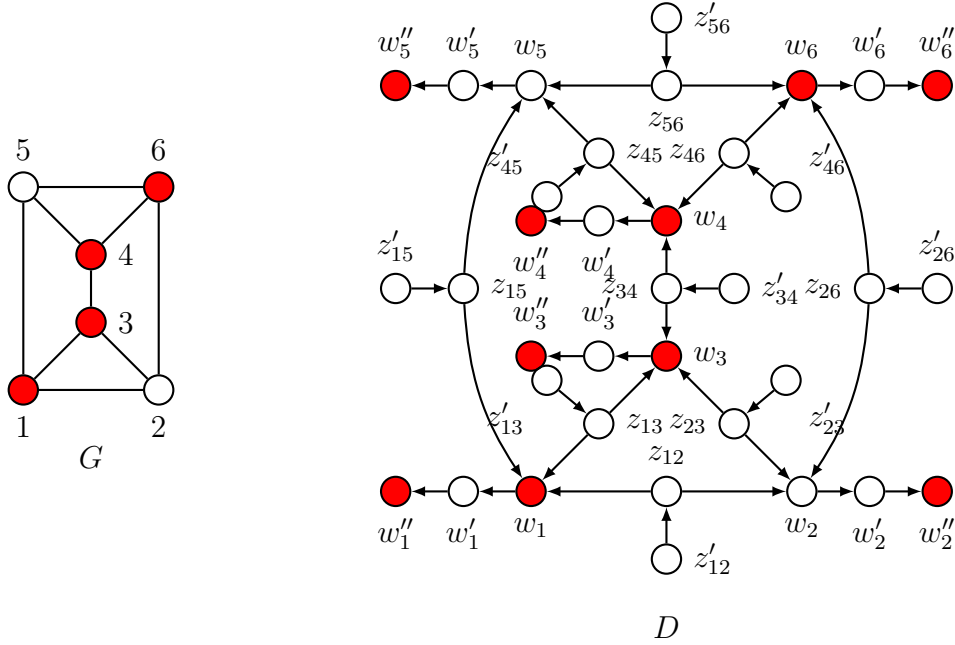

  Consider a quasi-kernel $Q \subseteq V'$ of $D = f(G)$. We claim that it can be transformed in polynomial time into a vertex cover $C \subseteq V$ of $G$ such that $|C| \leq |Q|$. To see this, observe first that $Q$ can be transformed 
  in polynomial time into a quasi-kernel $Q' \subseteq V'$ such that (i) $|Q'| \leq |Q|$ and
  (ii) $z'_e \notin Q'$ and $z_e \notin Q'$ for every $e \in E$.
  Indeed, repeated applications of the following two procedures enable us to achieve the claimed quasi-kernel.
  \begin{itemize}
    \item  
    Suppose that there exists $z'_e \in Q$ for some $e = ij \in E$.
    Then it follows that $z_e \notin Q$ (by independence).
    Furthermore, we have $w''_i \in Q$ and $w''_j \in Q$, and hence 
    $w'_i \notin Q$ and $w'_j \notin Q$.
    Therefore, $w_i \in Q$ or $w_j \in Q$ (possibly both).
    On account of the above remarks, $Q' = Q \setminus \{z'_e\}$ is a quasi-kernel of $D$
    and $|Q'| < |Q|$.
    \item
    Let $Z_i \subseteq Q$ stand for the set of vertices $z_e \in Q$, 
    where $e$ is an edge incident to the vertex $i$ in $G$.
    Suppose that there exists some set $Z_i \neq \varnothing$.
    Then it follows that $w_i \notin Q$ (by independence).
    Furthermore, we have $w''_i \in Q$, and hence $w'_i \notin Q$.
    On account of the above remarks, $Q' = \left(Q \setminus Z_i\right) \cup \{w_i\}$ 
    is a quasi-kernel of $D$ and $|Q'| \leq |Q|$.
  \end{itemize}
  From such a $Q'$, construct then a vertex cover $C \subseteq V$ of $G$ as follows:
  for $1 \leq i \leq n$, add the vertex $i$ to $C$ if $w_i \in Q'$. By construction, $C$ is a vertex cover of $G$
  of size $|C| = |Q'| - |V|$

  Finally, it is easy to see that from a vertex cover $C \in V$ of $G$ we can construct a 
  quasi-kernel $Q \subseteq V'$ of $D = f(G)$ of size exactly $|C| + |V|$:
  for every $1 \leq i \leq n$, add $w''_i$ to $Q$ and add $w_i$ to $Q$ if $i \in C$.
  Since $G$ is a cubic graph, we have $|C| \geq |V|/4$, and hence 
  $|Q| = |C| + |V| \leq |C| + 4|C| = 5|C|$.

  Thus $\opt(f(G)) \leq 5 \opt(G)$ and we have shown that $f$ is an $L$-reduction with parameters 
  $\alpha = 5$ and $\beta = 1$.
\end{proof}

\section{Concluding remarks}

In 2001 Gutin et al.~\cite{Gutin_2001} conjectured that every sink-free digraph has two disjoint quasi-kernel
(this stronger conjecture implies the small quasi-kernel conjecture~\cite{conj-qk}). 
Whereas this conjecture has been disproved by the same authors in 2004~\cite{Gutin_2004}, the clique number may play a role in this context.
Indeed, on the one hand, the key element on the counterexample constructed by Gutin et al.~\cite{Gutin_2004} is the presence of a $K_7$. 
On the other hand, the small quasi-kernel conjecture is true for sink-free orientations of $4$-colorable graphs~\cite{kostochka_towards_2020}, which have no $K_5$. 
This raises the question on whether every sink-free $K_5$-free digraph has two disjoint quasi-kernels, and, more generally, on how disjoint 
quasi-kernels and the clique number relate.

\bibliographystyle{amsplain}
\bibliography{quasi-kernel}

\end{document}